\DeclareMathOperator{\F}{\mathbb F}
\theoremstyle{plain}
\newtheorem{lemma}{Lemma}
\newtheorem{theorem}{Theorem}
\newtheorem{example}{Example}
\newcommand{\E}[2]{\mathbf E_{{#1}}\left[{#2}\right]}
\newcommand{\mF}{\mathcal F}
\newcommand{\set}[1]{\left\{{#1}\right\}}
\newcommand{\floor}[1]{\left\lfloor{#1}\right\rfloor}
\newcommand{\abs}[1]{\left|{#1}\right|}
\DeclareMathOperator{\sgn}{sgn}
\begin{document}

%+Title
\title{Fast Block Sequential Decoding of Polar Codes}
\author{Grigorii Trofimiuk, ~\IEEEmembership{Student Member,~IEEE,} Nikolai Iakuba, ~\IEEEmembership{Student Member,~IEEE,} Stanislav Rets,  Kirill Ivanov, ~\IEEEmembership{Student Member,~IEEE,} Peter Trifonov, ~\IEEEmembership{Member,~IEEE}\thanks{
Copyright (c) 2015 IEEE. Personal use of this material is permitted. However, permission to use this material for any other purposes must be obtained from the IEEE by sending a request to pubs-permissions@ieee.org.

This work was supported by Government of Russian Federation (grant 08-08).

This work was partially presented at the International Workshop on Wireless Communication Systems'2015.

G. Trofimiuk, N. Iakuba and P. Trifonov are with ITMO University, Saint-Petersburg, Russia. K. Ivanov is with EPFL, Switzerland. 
 E-mail: \{gtrofimiuk,nyakuba,pvtrifonov\}@itmo.ru, kirill.ivanov@epfl.ch}}

\date{\today}
\maketitle

%-Title

\markboth{IEEE Transactions on Vehicular Technology,~Vol.~XX, No.~XX, XXX~2020}
{Trofimiuk et al: Fast Sequential Decoding of Polar Codes}

%+Abstract
\begin{abstract}
A reduced complexity sequential decoding algorithm for polar (sub)codes is described. The proposed approach relies on a decomposition of the polar (sub)code being decoded into  a number of outer codes, and on-demand construction of codewords of these codes in the descending order of their probability. Construction of such codewords is implemented by fast decoding algorithms, which are available for  many codes arising in the decomposition of polar codes. Further complexity reduction is achieved by taking hard decisions of the intermediate LLRs, and avoiding decoding of some outer codes.
 Data structures for sequential decoding of polar codes are described.

 The proposed algorithm can be also used for decoding of polar codes with CRC and short extended BCH\ codes. It has lower average decoding complexity compared with the existing decoding algorithms for the corresponding codes. \end{abstract}
%-Abstract
\begin{IEEEkeywords}
Polar codes, polar subcodes, sequential decoding, Plotkin construction.
\end{IEEEkeywords}
\IEEEpeerreviewmaketitle 
\sloppy
\section{Introduction}
%Polar codes is a family of capacity-achieving codes with  low-complexity construction, encoding and decoding algorithms \cite{arikan2009channel}.
%However, their finite-length performance is quite poor. Improved code constructions, like polar codes with CRC \cite{tal2015list} and polar subcodes \cite{trifonov2016polar,trifonov2017randomized}, were shown to outperform state-of-the-art LDPC and turbo-codes.  However, list decoding techniques are needed to perform near-maximum likelihood decoding of such codes \cite{tal2015list}.  

Polar codes are first capacity-achieving codes with low-complexity construction, encoding and decoding methods \cite{arikan2009channel}. Near maximum likelihood (ML) decoding can be performed with the Tal-Vardy successive cancellation list (SCL) decoding \cite{tal2015list}. However, finite-length performance of polar codes is quite poor, motivating thus development of improved constructions. Short polar subcodes \cite{trifonov2016polar,trifonov2017randomized} and CRC-aided polar codes \cite{tal2015list} were shown to outperform state-of-the-art LDPC and turbo codes under list decoding with small list size.

% Polar codes are first capacity-achieving codes with low-complexity construction, encoding and decoding methods \cite{arikan2009channel}. Near maximum likelihood decoding can be performed with Tal-Vardy successive cancellation list algorithm (SCL) \cite{tal2015list}. However, finite-length performance of polar codes is quite poor, which gives raise to improved constructions, e.g. polar subcodes \cite{trifonov2016polar,trifonov2017randomized} and CRC-aided polar codes \cite{tal2015list}, that were shown to outperform state-of-the-art LDPC and turbo codes under SCL algorithm with small list size.

The complexity of SCL decoding algorithm can be reduced by employing block decoding, i.e. joint processing of subsequent blocks of information symbols \cite{alamdaryazdi2011simplified,sarkis2014fast,sarkis2016fast,hashemi2017fast,hanif2017fast,ardakani2019fast}, or symbol-based decoding techniques \cite{xiong2016symboldecision}. The complexity of this method can be further reduced by constructing unrolled decoders with  simplified flow control logic \cite{giard2015gbits}.

Another approach is to utilize stack decoding  \cite{niu2012stack} or its improved version known as the sequential decoding algorithm (SDA) \cite{miloslavskaya2014sequential,trifonov2018score}. This algorithm avoids  construction of many useless low-probability paths in the code tree. For sufficiently high SNR, its complexity approaches that of the successive cancellation (SC) decoding algorithm with the performance close to that of the SCL method. Varied improvements of stack decoding  were also proposed in \cite{zhou2016successive, aurora2018lowcomplexity, song2019efficient,zhou2020efficient}.

An alternative way to implement decoding of polar codes is based on sphere decoding \cite{husmann2018reduced,zhou2019improved,guo2015efficient,niu2014low}. However, the complexity of this method grows quickly with code length, so that the results for this method have been reported only for very short codes.

The idea of joint processing of some blocks of information symbols was suggested in \cite{alamdaryazdi2011simplified} in the context SC decoding and generalized in \cite{sarkis2016fast} for the case of SCL decoding. In this paper we extend this approach to the case of sequential decoding. We show that the proposed method, referred to as block sequential decoding algorithm (BSDA), can provide
% Many of the implementation tricks considered in \cite{sarkis2016fast} in %the context of list decoding are applicable in the case of sequential decoding %as well. We show that by combining them (as well as some new ones) with %the principle of sequential decoding, one can obtain  
the performance close to that of the SC list decoder with large list size with complexity approaching (at high SNR) that of the unrolled SC decoder. The proposed approach can be used both for polar subcodes and polar codes with CRC.
Furthermore, we show that, by exploiting the representation of a linear code via a system of dynamic freezing constraints, the proposed approach can be used for decoding of other error-correcting codes. In particular, we show that for a $(128,64,22)$ extended primitive narrow-sense BCH (eBCH) code the proposed algorithm provides better performance and lower complexity compared with a recent trellis-based sequential-type algorithm \cite{chen2015algorithms}.

The paper is organized as follows. The background on polar codes and the decoding algorithms is presented in Section \ref{sBackground}. The BSDA is introduced in Section \ref{sBlock}. The algorithms for decoding of some outer codes are presented in Section 
\ref{sOuterDecoding}. The construction of polar subcodes and its processing in the proposed algorithm is discussed in Section 
\ref{sPolSub}. Complexity analysis is provided in Section \ref{sComplexity}.
Simulation results are presented in Section \ref{sNumRes}. The implementation details of low-level data structures are described in the Appendix.

\section{Background}
\label{sBackground}
\subsection{Polar codes and Plotkin decomposition}
For a positive integer $n$, denote by $[n]$ the set of $n$ integers $\{0,1,\dots\,n-1\}$.
An $(n=2^m,k)$ polar code over $\F_2$ is a linear block code generated by $k$ rows of 
matrix\footnote{Polar codes are typically defined with the bit-reversal permutation matrix. However, it is convenient here to omit it, since this results in a simpler description of the proposed decoding algorithm. } $A_m=F^{\otimes m}$, where  $F=\begin{pmatrix}1&0\\1&1
\end{pmatrix}$, $\otimes m$ denotes 
$m$-times Kronecker product of the matrix with itself \cite{arikan2009channel}.    Hence, a codeword of a classical polar code is obtained as 
$c_0^{n-1}=u_0^{n-1}A_{m}$, where $u_s^t=(u_s,u_{s+1},\dots,u_t)$, $u_i=0, i\in \mathcal F,$  $\mathcal F\subset[n]$ is the 
set of $n-k$ frozen symbol indices, which will be referred to as \textit{frozen set}, and the remaining symbols are set to the data symbols being encoded.

An $(n = 2^m, k)$ polar code $\mathcal C$ with the frozen set $\mathcal F$ can be represented  as a code obtained via Plotkin concatenation of polar codes $\mathcal C_0$ and $\mathcal C_1$, i.e.
\begin{equation}
\label{m:PlotkinDec}
\mathcal C = \set{(u+v|v)\:|\:u \in \mathcal C_0, v \in \mathcal C_1},
\end{equation}
where $\mathcal C_i$ is given by the frozen set
$\mathcal F_i$, $\mathcal F_0 = \mathcal F \cap [n/2]$, $\mathcal F_1 = \set{j\:|\:j + n/2 \in \mathcal F}$.
 Such representation will be referred to as \textit{Plotkin decomposition} (PD) of the code $\mathcal C$ into the codes $\mathcal C_0$ and $\mathcal C_1$. This decomposition can be applied recursively. After a number of decomposition steps, it results in some simple codes such as repetition, SPC, etc, which admit low complexity decoding [5], [7].

%Многократным применением разложение полярного кода можно представить его %как обобщенный каскадный и применить многошаговое декодирование.
 
\subsection{Sequential decoding of polar codes}
Consider the decoding of $(n,k)$ polar code. Let $u_0^{n-1}$ be the  vector of input symbols of the polarizing transformation used by the transmitter. Given a received noisy vector $y_0^{n-1}$, the SDA constructs a number of partial candidate  vectors (paths) $v_0^{\phi-1}\in \F_2^\phi, \phi\leq n$, where $\phi-1$ is referred to as a path \textit{phase}, then evaluates how close their continuations $v_0^{n-1}$ may be to the received sequence, and eventually produces a single codeword, being a solution of the decoding problem.

The algorithm makes use of a double-ended priority queue (PQ). 
A PQ is a data structure, which stores tuples $(M,v_0^{\phi-1})$, where $M=M(v_0^{\phi-1},y_0^{n-1})$ is the score of the path $v_0^{\phi-1}$,  and provides efficient algorithms for the following operations \cite{Cormen2001introduction}:
\begin{itemize}
\item push a tuple into the PQ;
\item pop  a tuple $(M,v_0^{\phi-1})$ (or just $v_0^{\phi-1}$) with the highest $M$;
\item pop  a tuple $(M,v_0^{\phi-1})$ (or just $v_0^{\phi-1}$) with the lowest $M$;
\item remove  a  tuple from the PQ.
\end{itemize}
We assume here that the PQ may contain at most $D$ elements. 

We employ the multilevel bucket PQ implementation \cite{yakuba2015multilevel}, which is much more efficient compared to the  heap-based approach \cite{zhou2016successive} in the context of sequential  decoding.

 If the decoder returns to a phase $i$ more than $L$ times, all paths shorter than $i + 1$ are also removed. The parameter $L$ affects the performance of SDA in the same way as list size in the SCL decoder.

The stack decoding algorithm for polar codes operates as follows \cite{chen2013improved}:
\begin{enumerate}
\item Push into the PQ a zero-length vector with score $0$. Let $q_0^{n-1}=0$, where $q_\phi$ is the counter for the number of visits to phase $\phi$.
\item Extract from the PQ a path $v_0^{\phi-1}$ with the highest score. Let $q_\phi\gets q_\phi+1$.
\item If $\phi=n$, return codeword $v_0^{n-1}A_{m}$ and terminate the algorithm.
\item If  the number of valid continuations $v_0^{\phi}$ of a path $v_0^{\phi-1}$ exceeds the amount of free space in the PQ, remove from it the element with the smallest score.
\item Compute scores $M(v_0^\phi,y_0^{n-1})$ of valid continuations $v_0^\phi$ of the extracted path, and push them into the PQ.  Let $\phi \leftarrow \phi + 1$.
\item If $q_{\phi}\geq L$, remove from the PQ all paths $v_0^{j-1}, j\leq \phi$.
\item Go to step 2.
\end{enumerate}
In what follows, by iteration we mean one pass of the above algorithm over steps 2--7.

The parameter $D \leq  kL$ affects the amount of memory needed by the sequential decoder.  In general, $D$ can be much less than $kL$, however, setting $D$ too small may results in performance degradation.

A score function $M(v_0^\phi,y_0^{n-1})$ can be obtained 
as a generalization of the Fano metric, which was introduced  for sequential decoding of convolutional codes \cite{massey1972variable}. In the context of polar codes, its approximate version can be written as \cite{trifonov2018score} 
\begin{align}
M(v_0^{\phi-1},y_0^{n-1})=&\underbrace{\sum_{i=0}^{\phi-1}\tau(S_m^{(i)}(v_0^{i-1}|y_0^{n-1}),v_i)}_
{R(v_0^{\phi-1}|y_0^{n-1})} -\Psi(\phi),
\label{mPathScore}
\end{align}
where 
$
\Psi(\phi)=\E{Y_0^{n-1}}{R(u_0^{\phi-1}|Y_0^{n-1})}
$
is the bias function, which can be pre-computed offline, $Y_0^{n-1}$ are the random variables corresponding to the received vector,
$$\tau(S,v)=\begin{cases}
0,&\sgn(S)=(-1)^v\\
-|S|,&\text{otherwise,}
\end{cases}$$
is the penalty function, and   $S_m^{(i)}(v_0^{i-1},y_0^{n-1})$ are the modified log-likelihood ratios (LLRs) \cite{trifonov2018score,balatsoukasstimming2015llrbased}, which are given by
\begin{align}
\label{mMinSum1}
S_{\lambda}^{(2i)}(v_0^{2i-1},y_0^{2^\lambda-1})=&Q(a,b)=\sgn (a)\sgn (b)\min(|a|,|b|),\\
\label{mMinSum2}
S_{\lambda}^{(2i+1)}(v_0^{2i},y_0^{2^\lambda-1})=&P(v_{2i},a,b)=(-1)^{v_{2i}}a+b,
\end{align}
where $\lambda$ is a layer,  
$a=S_{\lambda-1}^{(i)}(v_{0,e}^{2i-1}\oplus v_{0,o}^{2i-1},y_{0,e}^{2^{\lambda}-1})$, and $b=S_{\lambda-1}^{(i)}(v_{0,o}^{2i-1},y_{0,o}^{2^\lambda-1})$.

The first term of \eqref{mPathScore} is the total penalty of path $v_0^{\phi-1}$ for its deviation from the one given by the hard decisions based on LLRs $S_m^{(i)}(v_0^{i-1}|y_0^{n-1})$. The second term is the expected value of the first term under the assumption that path $v_0^{\phi-1}$ is correct. Bias term allows one to properly compare the paths of different length and results in a huge reduction of the average number of iterations performed by the stack algorithm 
\cite{trifonov2018score} compared with the original implementation  \cite{chen2013improved}.

Similarly to the case of sequential decoding of convolutional codes, the described algorithm does not necessarily perform ML decoding even for $L=\infty$. This is due to the bias term in the path score, which may cause the correct path to be removed, if its score drops too sharply at some early phase $\phi$.

\section{Block sequential decoding}
\label{sBlock}
We propose to reduce the complexity of sequential decoding by joint processing of blocks of input symbols of the polarizing transformation. Similar approach was suggested in \cite{sarkis2016fast} in the context of list decoding. However, we show that in the case of sequential decoding this provides some additional benefits. Most importantly, one does not need to construct immediately $L$ most probable codewords for each block. Instead, these codewords can be constructed on-demand, and in many cases just one codeword is sufficient. 
 
\subsection{Recursive decomposition of polar codes}
\label{sDecomp}
Let us consider decoding of an $(n, k)$ polar code $\mathcal{C}$. 
We  propose
to recursively apply PD to the code until one obtains codes which admit efficient decoding.
This results in a code decomposition tree  similar to that introduced in \cite{alamdaryazdi2011simplified,sarkis2013increasing}.

 Each non-leaf node of this tree corresponds to a code $\hat{\mathcal C}$, and two its children correspond to codes $\hat{\mathcal C}_{0}$ and $\hat{\mathcal C}_{1}$ obtained from its PD. Codes corresponding to the leaves of this tree will be referred to as \textit{outer codes}. Let $\mathcal V$ be the number of leaves in the tree. We enumerate outer codes $\mathcal C_\psi$ with indices $\psi \in [\mathcal V]$ in the ascending order from the leftmost to the rightmost leaf of the code decomposition tree (see Figure \ref{fPD}). 

Essentially, list and sequential algorithms recursively decompose   $(n,k)$ polar code $\mathcal
C$, until
codes of length $1$ are obtained.  Each of these codes corresponds to symbols
$u_\phi, 0\leq \phi<n$, where $\phi$ is  the phase number.
We propose to stop this recursion at some layers, and arrange  symbols $u_\phi$ into blocks, which  correspond to $(n_\psi = 2^{m_\psi},k_\psi,d_\psi)$ codes $\mathcal C_\psi, \psi\in [\mathcal V]$, obtained via PD, where $n_\psi$ is length, $k_\psi$ is dimension, and $d_\psi$ is minimum distance of $\mathcal C_\psi$.  The $i$-th block starts at phase $\phi_\psi-n_\psi+1$ and ends at phase $\phi_\psi$, 
%so that $\phi_{i1}=\phi_i$, $\phi_{i0}=\phi_i-n_i/2$, where $n_i=2^{m_i}$ %for some $m_i\leq m$,and $\phi=n-1$.   
Symbols within the block are processed jointly. This processing reduces to list decoding
of outer codes $\mathcal C_\psi$. Construction of such a decomposition can be simplified by employing the techniques suggested in \cite{hashemi2019rateflexible}.
\begin{figure}[h]
\begin{center}
\scalebox{0.65}
{
\parbox{0.5\textwidth}
{
\begin{tikzpicture}
\tikzstyle{code} = [draw,fill=none,minimum size=1.2em]
\node[code,minimum width=20em] (v2) at (-3,0) {$\mathcal C(16,10)$};
\node[code,minimum width=10em] (v1) at (-5.1,-1.5) {$\hat{\mathcal C}_0(8,6)$};
\node[code,minimum width=10em] (v3) at (-0.7,-3) {$\mathcal C_2(8,4)$};
\node[code,minimum width=5em] (v4) at (-6.1,-3) {$\mathcal C_{0}(4,3)$};
\node[code,minimum width=5em] (v5) at (-3.95,-3) {$\mathcal C_{1}(4,3)$};
\draw  (v1) edge (v2);
\draw  (v2) edge (v3);
\draw  (v1) edge (v4);
\draw  (v1) edge (v5);

\node (v6) at (-7.2,-4) {};
\node (v7) at (1.5,-4) {};
\draw  (v6) edge[->] (v7);
\node at (-5.2,-4.25) {3};
\node at (-3.1,-4.25) {7};
\node at (1.1,-4.25) {15};
\node at (1.5,-4.25) {$\phi_i$};
\end{tikzpicture}}}
\end{center}
\caption{Plotkin decomposition tree for $(16,10)$ code}
\label{fPD}
\end{figure}

\begin{example}
\label{sExamplePD}
Consider the $(16,10)$ polar code $\mathcal C$ with frozen set $\mathcal F = \set{0,4,8,9,10,12}$. 

The PD\ tree of this code is shown in figure \ref{fPD}. One step of PD\ results in codes $\hat{\mathcal C_0}$ and $\mathcal C_2$. $(8,6)$ code $\hat{\mathcal C_0}$ is non-leaf in PD tree with the frozen set $\hat{\mathcal F_0} = \set{0,4}$. By applying  the PD to the code $\hat{\mathcal C_0}$, one obtains $(4,3)$ outer codes $\mathcal C_0 = \mathcal C_1$ with frozen sets $\mathcal F_0 = \mathcal F_1 = \set{0}$. These codes can be efficiently decoded (see sections \ref{ssHadamard} and \ref{ssSPC} for details). The $(8,4)$ code $\mathcal C_2$ with $\mathcal F_2 = \set{0,1,2,4}$ can also be efficiently decoded (see section \ref{ssHadamard}), hence we stop the recursion.
\end{example}

It remains to transform the path score function \eqref{mPathScore} into a form suitable for use with decoders of outer codes.
Let 
  $E(c_0^{n-1},S_0^{n-1})=\sum_{i=0}^{n-1}\tau(S_i,c_i)$
be the ellipsoidal weight\footnote{Conventionally, it is defined as a non-negative function. However, here we define it as a non-positive one to ensure consistency with the values which arise in the Tal-Vardy algorithm.} (also known as correlation discrepancy) of vector $c_0^{n-1}\in \F_2^n$ with respect to LLRs $S_0^{n-1}$   \cite{Valembois2004box,moorthy1997softdecision}. 
% \begin{lemma}
% \label{lWeightSum}
% For any  $c_0^{2n-1}\in\F_2^{2n-1}$
% $$E(c_0^{2n-1},S_0^{2n-1})=E(c_{0,e}^{2n-1}+c_{0,o}^{2n-1},\tilde S_0^{n-1})+E(c_{0,o}^{n-1},\overline S_0^{n-1}),$$
% where $\tilde S_i= Q(S_{2i},S_{2i+1})$, $\overline S_i=P(c_{2i},S_{2i},S_{2i+1})$.
% \end{lemma} 
% 

\begin{lemma}
\label{WeightSum}
For any $c_0^{2n-1}\in\F_2^{2n}$ one has 
$E(c_0^{2n-1},S_0^{2n-1})=E(c_{0}^{n-1}\oplus c_{n}^{2n-1},\tilde S_0^{n-1})+E(c_{n}^{2n-1},\overline S_0^{n-1}),$
where $\tilde S_i= Q(S_{i},S_{i+n})$, $\overline S_i=P(c_{i}\oplus c_{i+n},S_{i},S_{i+n})$.
\end{lemma}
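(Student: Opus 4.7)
\medskip
\noindent\textbf{Proof plan.} My first step is to exploit the additivity of the ellipsoidal weight: both sides are sums over $i\in[n]$, so it suffices to prove the identity term-by-term, i.e.\ to establish that for any $a,b\in\mathbb R$ and $u,v\in\F_2$,
\begin{equation*}
\tau(a,u)+\tau(b,v)=\tau\!\bigl(Q(a,b),u\oplus v\bigr)+\tau\!\bigl(P(u\oplus v,a,b),v\bigr),
\end{equation*}
with $a=S_i$, $b=S_{i+n}$, $u=c_i$, $v=c_{i+n}$. The full lemma follows by summing over $i$.

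Next I would rewrite $\tau$ in a sign-free form. A direct check of the two cases in its definition shows
\begin{equation*}
\tau(S,x)=\min\!\bigl(0,(-1)^{x}S\bigr),
\end{equation*}
which converts the penalty into a minimum of two real numbers. Introducing $\alpha=(-1)^{u}a$ and $\beta=(-1)^{v}b$, I would then observe two clean algebraic identities:
\begin{align*}
(-1)^{v}P(u\oplus v,a,b)&=(-1)^{u\oplus v+v}a+(-1)^{v}b=\alpha+\beta,\\
(-1)^{u\oplus v}Q(a,b)&=\sgn(\alpha)\sgn(\beta)\min(|\alpha|,|\beta|).
\end{align*}
With these substitutions, the term-by-term identity collapses to a statement that no longer involves the bits $u,v$ at all, namely
\begin{equation*}
\min(0,\alpha)+\min(0,\beta)=\min\!\bigl(0,\sgn(\alpha)\sgn(\beta)\min(|\alpha|,|\beta|)\bigr)+\min(0,\alpha+\beta).
\end{equation*}

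Finally, I would verify this pure two-variable identity by a short case analysis on the signs of $\alpha$ and $\beta$. When $\alpha,\beta\ge 0$ both sides vanish; when $\alpha,\beta<0$ both sides equal $\alpha+\beta$; when the signs disagree, say $\alpha\ge 0>\beta$, the first term on the right equals $-\min(\alpha,-\beta)$, and a further split on whether $\alpha+\beta$ is non-negative or negative shows the sum equals $\beta$, matching the left-hand side (the opposite sign pattern is symmetric). I expect the only real bookkeeping obstacle to be keeping the signs straight in this last case split; the reduction itself is conceptually routine once the rewrite $\tau(S,x)=\min(0,(-1)^{x}S)$ and the change of variables to $(\alpha,\beta)$ are in place, which is really the key observation that decouples the identity from the bits and from the parity $u\oplus v$.
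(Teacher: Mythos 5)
Your proof is correct and follows essentially the same route as the paper's: additivity reduces the claim to the scalar case $n=1$, the code bits are absorbed into the signs of the LLRs (your $\alpha,\beta$ are exactly the paper's $S_0'=(-1)^{c_0}S_0$, $S_1'=(-1)^{c_1}S_1$), and the resulting bit-free identity is verified by a case analysis on signs. Your explicit rewrite $\tau(S,x)=\min\bigl(0,(-1)^{x}S\bigr)$ just makes the bookkeeping a bit more systematic than the paper's terser two-case argument on the sign of $Q(S_0',S_1')$.
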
 
\begin{proof}
It can be seen  that  $E(c_0^{2n-1},S_0^{2n-1})=\sum_{i=0}^{n-1}E((c_i,c_{i+n}),(S_i,S_{i+n})).$ Hence, it is sufficient to prove the statement for   $n=1$.  Observe that $E(c_0^1,S_0^1)=\gamma=E((0,0),(S_0',S_1'))$, where $S_i'=(-1)^{c_i}S_i$. Hence, it is sufficient to consider the case of $c_i=0$.

For the case $Q(S_0',S_1')>0$ one has $\gamma=\tau(S_0',0)+\tau(S_1',0)=\tau(S_0'+S_1',0)$, while for $Q(S_0',S_1')<0$  one has $\gamma=\min(|S_0'|,|S_1'|)+\tau(S_0'+S_1',0)$. The latter equality follows by considering the cases of $S_0'+S_1'>0$ and $S_0'+S_1'\leq 0$. 
\end{proof}
\begin{theorem}\label{tLogLikelihood}
The ellipsoidal weight of the vector $u_0^{2^m-1}A_m$ with respect to the input LLRs $\mathbf S$ is equal to score of the path $u_0^{2^m-1}$ in the SC\ decoder, i.e. $E(u_0^{2^m-1}A_m,\mathbf S)=\sum_{i=0}^{2^m-1}\tau(S_m^{(i)}(u_0^{i-1},y_0^{2^m-1}),u_i),$
where $\mathbf S=(S_{0}^{(0)}(y_0),\dots,S_{0}^{(0)}(y_{2^m-1}))$.
\end{theorem}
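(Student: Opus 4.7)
My plan is to induct on $m$, with Lemma \ref{WeightSum} providing the key recursive step.

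The base case $m=0$ is immediate: $A_0=(1)$ and both sides reduce to $\tau(S_0^{(0)}(y_0),u_0)$.

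For the inductive step from $m$ to $m+1$, set $n=2^m$. I would exploit the block structure $A_{m+1}=F\otimes A_m=\bigl(\begin{smallmatrix}A_m&0\\A_m&A_m\end{smallmatrix}\bigr)$, which yields the Plotkin decomposition $c=u_0^{2n-1}A_{m+1}=\bigl((u_0^{n-1}+u_n^{2n-1})A_m\,\bigm|\,u_n^{2n-1}A_m\bigr)$; in particular $c_i\oplus c_{i+n}=(u_0^{n-1}A_m)_i$. Lemma \ref{WeightSum} then gives
\[E(c,\mathbf S)=E(u_0^{n-1}A_m,\tilde{\mathbf S})+E(u_n^{2n-1}A_m,\overline{\mathbf S}),\]
with $\tilde S_i=Q(S_i,S_{i+n})$ and $\overline S_i=P((u_0^{n-1}A_m)_i,S_i,S_{i+n})$. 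Applying the inductive hypothesis to each summand produces $2n$ penalty terms $\tau(S_m^{(j)}(u_0^{j-1},\tilde{\mathbf S}),u_j)$ and $\tau(S_m^{(j)}(u_n^{n+j-1},\overline{\mathbf S}),u_{n+j})$, for $j\in[n]$.

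The remaining task is to identify these with the $2n$ terms $\tau(S_{m+1}^{(i)}(u_0^{i-1},\mathbf S),u_i)$ on the right-hand side. I would establish the identities $S_m^{(j)}(u_0^{j-1},\tilde{\mathbf S})=S_{m+1}^{(j)}(u_0^{j-1},\mathbf S)$ and $S_m^{(j)}(u_n^{n+j-1},\overline{\mathbf S})=S_{m+1}^{(n+j)}(u_0^{n+j-1},\mathbf S)$ for $j\in[n]$, which lines the two expansions up term by term. This is where the main obstacle lies: \eqref{mMinSum1}--\eqref{mMinSum2} unrolls $S_{m+1}^{(i)}$ using an even/odd split of both the LLR vector and the past decisions, while Lemma \ref{WeightSum} uses the first-half/second-half split dictated by the block form of $A_{m+1}$. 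To reconcile the two unrollings, I would run a nested induction showing that the tilde and overline operations commute with the even/odd split of the LLR vector, so that the even/odd expansion of $S_{m+1}^{(i)}$ can be regrouped into a first-half/second-half expansion matching the IH output; equivalently, $A_{m+1}=F^{\otimes(m+1)}$ admits both Plotkin decompositions, so the two recursive descriptions of the decoder agree. Once this matching is verified, the induction closes.
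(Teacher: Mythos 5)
Your proposal is correct and follows essentially the same route as the paper: induction on $m$ with the base case $m=0$, Lemma~\ref{WeightSum} applied to the block form of $A_{m+1}$ to split $E(u_0^{2^{m+1}-1}A_{m+1},\mathbf S)$ into $E(u_0^{2^m-1}A_m,\tilde{\mathbf S})+E(u_{2^m}^{2^{m+1}-1}A_m,\overline{\mathbf S})$ with $\tilde S_i=Q(S_i,S_{i+2^m})$ and $\overline S_i=P((u_0^{2^m-1}A_m)_i,S_i,S_{i+2^m})$, and the inductive hypothesis applied to each half. The only difference is that you explicitly flag, and sketch a nested induction for, the identification $S_m^{(j)}(\cdot,\tilde{\mathbf S})=S_{m+1}^{(j)}(\cdot,\mathbf S)$ and $S_m^{(j)}(\cdot,\overline{\mathbf S})=S_{m+1}^{(2^m+j)}(\cdot,\mathbf S)$ --- i.e.\ the agreement of the first-half/second-half split with the even/odd split of \eqref{mMinSum1}--\eqref{mMinSum2} --- a step the paper absorbs silently into ``the result follows from the inductive assumption''; your extra care there is warranted but does not change the argument.
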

\begin{proof}
For $m=0$, the statement is obvious. Let us assume that it is valid for some $m\geq 0$. Then, from Lemma \ref{WeightSum}, one obtains $E(u_0^{2^{m+1}-1}A_{m+1},\mathbf S)=E(u_0^{2^{m}-1}A_{m},\widetilde {\mathbf S})+E(u_{2^m}^{2^{m+1}-1}A_{m},\overline {\mathbf S})$, where $\widetilde {\mathbf S}_i=S_1^{(0)}(y_i,y_{i+2^m}), 0\leq i<2^m,$ and $\overline {\mathbf S}_i=S_1^{(1)}((u_0^{2^{m}-1}A_{m})_i,(y_i,y_{i+2^m}))$. Then the result follows from the inductive assumption.  
\end{proof}
Theorem \ref{tLogLikelihood}  implies that the  sum of those terms in \eqref{mPathScore}, which correspond to the same block in the PD tree, can be obtained by construction of a codeword of the corresponding outer code $\mathcal C_\psi$, and computing its ellipsoidal weight. In some cases this can be more efficient than performing iterations of SDA.

 Observe that the proposed approach can be also viewed as follows: we stop the calculation of LLRs $S_m^{(\phi_\psi)} = S_m^{(\phi_\psi)}(v_0^{\phi_\psi-1},y_0^{n-1})$, given by recursion \eqref{mMinSum1}--\eqref{mMinSum2}, at layer $m-m_\psi$ and decode $2^{m_\psi}$ LLRs $\mathbf S = 
S_{m-m_\psi}^{(r(\psi))}$ in code $\mathcal C_\psi$, where $r(\psi) =\lfloor \phi_\psi/2^{m_\psi} \rfloor $. Thus, Theorem \ref{tLogLikelihood} allows one to rewrite the  score \eqref{mPathScore} for the path constructed up to  block $\psi$ as 
\begin{equation}
M(v_0^{\phi_{\psi}},y_0^{n-1}) =\underbrace{R(v_0^{\phi_{\psi-1}}|y_0^{n-1}) + E(c,\mathbf S)}_{R(v_0^{\phi_{\psi}}|y_0^{n-1})}-\Psi(\phi_\psi),
\end{equation}
where $c = u_{\phi_{\psi-1}+1}^{\phi_\psi} A_{m_\psi}$is a codeword of $C_{\psi}$. 
\subsection{Outer codes}

The main idea of the proposed approach is to perform jointly the steps of the above described SDA, which correspond to the same block in the PD tree. Each combined step reduces to list decoding of the corresponding outer code  $\mathcal C_\psi, \psi\in \mathcal V$. 
%In this section we consider some additional details about introducing the outer codes to the SDA.

Observe that the decoder of outer code $\mathcal C_\psi$ may produce at most $2^{k_\psi}$ codewords. Since we consider the sequential algorithm, one does not need to obtain these codewords immediately. Instead, the codewords of outer codes can be constructed one-by-one, i.e. once the corresponding path is extracted from the PQ. In this case these codewords should be constructed in the descending order of their ellipsoidal weight.

Moreover, in most cases it is sufficient to obtain just two such codewords, which can be computed in a simpler way compared to the full list decoding the outer code. Observe that this simplification is not possible in the context of SCL decoding of the polar code.

We require that for each outer code $\mathcal C_\psi$ are available subroutines \textit{Preprocess}$(\mathcal C_\psi,\mathbf S,Z)$ and \textit{GetNextCodeword}$(\mathcal C_\psi,Z,\hat c)$.
The former performs some code-dependent preprocessing of LLR vector $\mathbf S$, and saves its results in a state variable $Z$. 
The latter uses $Z$ to construct the next most probable codeword  in the list, which is stored in the array given by pointer $\hat{c}$,  and returns tuple $[e,b]$, where $b$ is a boolean value, which is true iff more codewords can be obtained by the subsequent calls,  and $e=E(\hat c,\mathbf S)$. The structure $Z$ includes the following fields:
\begin{itemize}
\item $\mathbf S$ --- vector of LLRs. 
\item Any additional data needed for efficient recovery of codewords of  $\mathcal C_\psi$ for  given  $\mathbf S$. 
\end{itemize}

Note that the amount of codewords to be returned by \textit{GetNextCodeword}$(\mathcal C_\psi,Z,\hat c)$ is upper bounded by $\min(2^{k_\psi},L)$.
\subsection{The algorithm}

% \item One should stop construction of codewords of a particular outer code if $L_0$ codewords have already been obtained, or the ellipsoidal weight of the next codeword to be obtained exceeds some threshold. These rules ensure that the decoder does not spend too much time on incorrect paths in the code tree.

\begin{figure}
\small
\begin{subfigure}{0.5\textwidth}
\begin{algorithm}{BSDA}{S_0^{n-1}, L, D}
%\CALL{Cleanup}()\\
\CALL{Initialize}()\\
\begin{WHILE}{true}
(M,l) \leftarrow \CALL{PopMax}()\\
\begin{IF}{r(\psi_l -1) \text{is odd}}
\CALL{IterativelyUpdateC}(l,m-m_{\psi_l-1},r(\psi_{l-1}))
\end{IF}\\
\begin{IF}{\psi_l = \mathcal{V}}
\RETURN{\CALL{GetArrayPointerC\_R}(l,0,0)}     
\end{IF}\\
\begin{IF}{B_l = 1} 
\CALL{RemoveBadPaths}(D)\\
\CALL{BackwardPass}(l) \text{//path cloning}
\end{IF}\\
\CALL{IterativelyCalcS}(l,m-m_{\psi_l},r(\psi_l))\\
\CALL{ForwardPass}(l)\text{//extending the current path}\\
q_{\psi_l}\=q_{\psi_l}+1\\
\begin{IF}{q_{\psi_l} \ge L}
\begin{FOR}{\text{All paths $l'$ stored in the PQ}}
\begin{IF}{\psi_l\leq \psi_{l'}}
\CALL{KillPath}(l')\\
\text{Remove $l'$ from the PQ}
\end{IF}
\end{FOR}
\end{IF}
\end{WHILE}
\end{algorithm}
\caption{The algorithm}
\label{fBSDA}
\end{subfigure}
\vspace{2mm}
\begin{subfigure}{0.5\textwidth}
\begin{algorithm}{Initialize}{}
l \leftarrow \CALL{AssignInitialPath}()\\
\CALL{PushPath}(0,l)\\
q_0^{\mathcal V-1}\leftarrow 0, \psi_l \leftarrow 0, R_l \leftarrow 0,
B_l\leftarrow 0\\
%[q_0^{\mathcal V-1}, \psi_l, R_l,B_l] \leftarrow [0, 0,0,0]\\
s \leftarrow \CALL{GetArrayPointerS\_W}(l,0)\\
s[i] \leftarrow S_i,0\leq i<n
\end{algorithm}
\caption{Initialization of the algorithm}
\label{fInitBSDA}
\end{subfigure}
\vspace{2mm}
\begin{subfigure}{0.5\textwidth}
\begin{algorithm}{ForwardPass}{l}
\mathbf S \leftarrow \CALL{GetArrayPointerS\_W}(l,m-m_{\psi_l})\\
\hat{c} \leftarrow \CALL{GetArrayPointerC\_W}
(l,m-m_{\psi_{l}},r(\psi_l))\\
\CALL{Preprocess}(\mathcal C_{\psi_l},\mathbf S,Z_l)\\
[e,b]\leftarrow \CALL{GetNextCodeword}(\mathcal C_{\psi_{l}},Z_l,\hat{c})\\
B_l \leftarrow\ b; \tilde R_l \leftarrow R_l;R_l \leftarrow R_l-e\\ 
%[B_l,\tilde R_l, R_l] \leftarrow [b, R_l, R_l -e]\\
\CALL{Push}(R_{l}-\Psi(\phi_{\psi_l}),l)\\
\psi_l \leftarrow \psi_l + 1
\end{algorithm}
%\vspace{3.8cm}
\caption{Construction of the most probable codeword of outer code}
\label{fForwardPassBSDA}
\end{subfigure}
\begin{subfigure}{0.5\textwidth}
\begin{algorithm}{BackwardPass}{l}
l' \leftarrow \CALL{ClonePath}(l)\\
\hat{c}\leftarrow \CALL{GetArrayPointerC\_W}(l',m-m_{\psi_{l-1}},r(\psi_l -1))\\
[e,b]\leftarrow \CALL{GetNextCodeword}(\mathcal C_{\psi_l-1},Z_l,\hat{c})\\
%[B_{l'}, Z_{l'}, R_{l'},\tilde R_{l'},\psi_{l'}] \leftarrow [b, Z_l,R_l-e,\tilde R_l,\psi_{l}]\\
B_{l'}\=b;
Z_{l'}\=Z_l; 
R_{l'} \leftarrow \tilde R_l-e;\tilde R_{l'}\leftarrow\tilde R_l;\psi_{l'} \leftarrow \psi_{l}\\
\CALL{Push}(R_{l'}-\Psi(\phi_{\psi_l-1}),l')
\end{algorithm}
\caption{On-demand construction of codewords of outer codes}
\label{fBackwardPassBSDA}
\end{subfigure}
\caption{Block sequential decoding algorithm}
\end{figure}

\begin{table}
\caption{Variables used in BSDA}
\begin{tabular}{|>{$}p{0.05\textwidth}<{$}|p{0.37\textwidth}|}\hline
\text{Variable}&Description\\\hline
l&index of a path $v_0^{\phi_{\psi_l}}$\\
q_i&Number of invocations of the $i$-th outer decoder\\
\psi_l&The index of outer decoder to be invoked for the $l$-th path\\
\phi_i&The last phase of the $i$-th block\\
B_l& True if the $l$-th path should be cloned\\
m_j&$=\log_2 n_j$, where $n_j$ is the length of outer code $\mathcal C_j$\\
R_l&Accumulated penalty $R(v_0^{\phi_{\psi_l}}|y_0^{n-1})$ for the $l$-th path \\
\tilde R_l&$=R(v_0^{\phi_{\psi_l-1}}|y_0^{n-1})$ \\
Z_l&Saved state for the last outer decoder used for the $l$-th path \\
M&Score of a path\\
r(\psi)& $\floor{\phi_{\psi}/2^{m_{\psi}}}$
\\\hline
\end{tabular}
\label{tVars}
\end{table}

Figure \ref{fBSDA} illustrates the proposed block sequential decoding algorithm (BSDA).
Table \ref{tVars} presents the description of some of its internal variables. The input arguments for the algorithm are the LLRs $S_i=\log\frac{W(y_i|0)}{W(y_i|1)}$, where $y_i$ is the result of transmission of codeword symbol $c_i$ over a memoryless output-symmetric channel, maximal number of times $L$ the decoder is allowed to pass via any phase or block, and maximal total number $D$ of paths, which can be stored in the PQ.

Let us provide the brief description of the proposed algorithm. The algorithm makes use of the Tal-Vardy list decoder data structures \cite{tal2015list}. The implementation based on the original ones is described in \cite{trofimiuk2015block}. In this work we introduce some modifications, which are discussed in the Appendix. They avoid data copying and simplify the interface to outer decoders.

The algorithm starts from subroutine 
\textit{Initialize} (see figure \ref{fInitBSDA}), where the decoding data structures are initialized, input LLRs $S_i$ are loaded and the initial path is pushed into the PQ. 

The main loop of \textit{BSDA} starts from extraction of path $l$ with the best score $M$ from the PQ. After that, the path clone operation is done (if it is possible) in line 10 in the \textit{BackwardPass} function (see figure 
\ref{fBackwardPassBSDA}). Then, the most probable continuation of the path $l$ is constructed in the function \textit{ForwardPass} (see figure \ref{fForwardPassBSDA}).
After that, if $q_{\psi_l}\geq L$ then shorter paths are deleted from the PQ. Iterations are performed until a codeword is obtained and returned in line 7 of the \textit{BSDA} function.

Below we discuss the algorithm in more details. We denote by $\phi_\psi$ the index of the last input symbol corresponding to the $\psi$-th outer  $(n_\psi = 2^{m_\psi}, k_\psi)$ code $\mathcal C_\psi$, $\psi \in [\mathcal V]$.  The partial  sums of the input symbols $v_i$ of the polarizing transformation, which are needed for computing of $S_m^{(i)}(v_0^{i-1},y_0^{n-1})$, are updated in line 5, where $r(\psi) = \floor{\phi_{\psi}/2^{m_{\psi}}}$. Observe that in our algorithm we update the partial sums only for paths which were extracted from the PQ, while in case of list decoding this should be performed for each path in the list.   

The boolean variable $B_{l}$ is set to true iff at least one more codeword of code $\mathcal C_{\psi_l-1}$ can be returned by the corresponding outer decoder. In this case the decoder ensures in line 9 (\textit{RemoveBadPaths} procedure) that there are at most $D-2$  entries in the PQ (if not, the paths with lowest scores are killed), and  calls to \textit{BackwardPass} function. This function constructs the next most probable codeword of $\mathcal C_{\psi_l-1}$. This variable is set in the \textit{ForwardPass} and \textit{BackwardPass} functions. 

In line 11 the vector of LLRs $\mathbf S$ is computed. The decoder makes a call to the \textit{ForwardPass} algorithm, which constructs the most probable continuation of the $l$-th path, i.e. performs (near)\ maximum likelihood decoding of vector $\mathbf S$ in outer code. If the number of times $q_{\psi_l}$ the decoder has visited the $\psi_l$-th block exceeds $L$, then paths shorter than $\phi_{\psi_l}$ are removed in line 18.
 The first steps of \textit{ForwardPass} algorithm are to obtain writable pointers to the array $\mathbf S$ of log-likelihood ratios $S_{m-m_{\psi_l}}^{(r(\psi_l))}$, computed by \textit{IterativelyCalcS}, and to the array $\hat c$, which is used to store the most probable continuation of the $l$-th path. In line $3$ an appropriate pre-processing algorithm for $\mathcal C  _{\psi_l}$ is invoked (see Section \ref{sOuterDecoding} for details), and the most probable codeword is constructed in line 4.  Variable $e$ is assigned to the ellipsoidal weight of this codeword, while $b$ is set to $true$ iff less probable codewords can be obtained by \textit{GetNextCodeword} function. Finally, the value $R_l= R(v_0^{\phi_{\psi_l}}|y_0^{n-1})$, is updated according to Theorem \ref{tLogLikelihood}, and the path is pushed to the priority queue. The previous value of $R_l$ is saved in $\tilde R_l$, so that it can be used later to obtain the score of less probable continuations of this path.

%\begin{figure}
%\small
%\end{figure}

The \textit{BackwardPass} algorithm  is used to obtain less probable codewords of outer codes in the descending order of their ellipsoidal weight. At line 1 the path is cloned. A writable pointer to the destination array for storing the codeword is obtained in line 2, and an appropriate codeword of the outer code is stored in this array.

The details of low-level functions \textit{GetArrayPointer*} used in the proposed algorithm are discussed in the Appendix.

\begin{example}
Consider decoding of the $(16,10)$ polar code $\mathcal C$ from the Example \ref{sExamplePD} in AWGN channel at  $E_b/N_0=5$ dB. 

We  need the values  of bias function ${\Psi}(3)\approx -0.47, 
{\Psi}(7)\approx -0.52,{\Psi}(15)\approx -0.56$. 
 Let the input LLRs $S_0^{(0)}$ be equal to $(0.44,7.46,7.19,2.82,5.63,9.78,6.06,
 -0.12, -0.64,9.38$,
 $10.87,13.0,13.43,9.43,2.02,13.2).$ Let $l=0$ be the index of the initial path. At the first iteration, in line 11 the decoder computes the vector of LLRs $S_2^{(0)}$, which equals to $(-0.44,7.46,2.02,-0.12)$. \textit{ForwardPass} function obtains codeword $(1,0,0,1)\in\mathcal C_{0}$ with the ellipsoidal weight $e=0$. Hence, in line 6 of the \textit{ForwardPass} function a path with score $0.47$ is pushed to the PQ. 

This path is extracted from the PQ at the next iteration of \textit{BSDA}. \textit{BackwardPass} function obtains codeword $(0,0,0,0) \in \mathcal C_0$ with $e=-0.56$. The path is cloned (let the ID of the cloned path be $l'=1$), and an entry with score $-0.56 +0.47=-0.09$ is pushed to the PQ. 

The vector of LLRs $S_2^{(1)}$, given by $(6.08,16.89,9.2,-2.94)$, is obtained at line 11 for path $0$.  Hence, one obtains codeword $(0,0,0,0) \in \mathcal C_1$ with $e=-2.94$ by \textit{ForwardPass} function, and path $0$ is pushed to the PQ with score $-2.94+0.52=-2.42$.

At the next iteration of the decoder, path $l=1$ is extracted from the PQ. The vector of LLRs $S_2^{(1)}$, given by $(5.19,16.89,9.2,2.7)$, is obtained in line 11. The codeword $(0,0,0,0) \in \mathcal C_{1}$ is obtained with $e=0$, and path $1$ is pushed to the PQ with score $-0.09-0+0.52=0.43$. 

This path is extracted  from the PQ at the next iteration. The LLRs $S_1^{(1)}$ are equal to $(-0.2,16.84,$$18.05,$$15.82,$ $19.06,$$ 19.2,8.08,13.08)$. These values are preprocessed by the decoder for code $\mathcal C_2$, and the all-zero codeword with $e=-0.2$ is obtained in line 4 of the \textit{ForwardPass} function. Hence, path $1$ is pushed to the PQ with score $-0.09-0.2+0.56=0.27$. 

This path is extracted at the next iteration of the decoder, and, since all leaf nodes in the PD tree have been visited, the decoder terminates returning the all-zero codeword.
\end{example}    
The proposed algorithm can be tailored to implement decoding of polar codes with CRC. To do this, one should add CRC validation to line 7 of the BSDA, so that iterations are performed until either a correct codeword is found, or no more paths remain in the PQ.

The proposed algorithm is not guaranteed to provide the same performance as the original SDA. In some cases its performance may be better, since the decoders for outer codes may avoid some errors of the sequential decoder. However, in some cases performance degradation may occur, if it happens that for an incorrect path $v_0^{n-1}$ and some $ i$ $ \forall j>i:M(v_0^{\phi_j},y_0^{n-1})>M(u_0^{\phi_i},y_0^{n-1}), $ and $\exists \tau\in (\phi_{j-1},\phi_j]: \forall s\geq \phi_i \,\,\, M(v_0^{\tau},y_0^{n-1})<M(u_0^{s},y_0^{n-1}).$
That is, the proposed algorithm may miss the opportunity to switch to the correct path at an intermediate phase $\tau$ within some block, and proceed with exploration of an incorrect path. However, simulation results presented below show that the impact of this problem is negligible. 

%\subsection{Improvements}
%\subsubsection{Hard decisions}
\subsection{Hard decisions}
In many cases the hard decision vector corresponding to some intermediate LLR vector $\mathbf S$  is error free, i.e. it is a codeword of $\mathcal C_i$. In this case one should avoid invoking a relatively complex soft-decision decoding algorithm of outer code, i.e. \textit{Preprocess} and \textit{GetNextCodeword} functions in lines 3-4 of \textit{ForwardPass}, unless non-ML codewords of the corresponding outer code are needed.

Consider some decoding iteration and suppose that  path $l$ is extracted from the PQ. Let us construct the hard decision vector $\bar c$ of $\mathbf S$. If $\bar c \in \mathcal C_i$, then we can immediately set $e \leftarrow 0$, $b \leftarrow 1$ and push the path to the PQ. The LLR vector $\mathbf S$ is saved in the state variable $Z_l$, so that computationally expensive pre-processing can be done later.  
 
If the hard decision vector is a valid codeword of the corresponding outer code, then it is very likely that the less probable codewords will not be needed during the next iterations of the decoding. Hence, it is possible to skip construction of such codewords. However, occasionally such codewords may be needed, and some provision needs to be done in order to recover them later. It can be easily seen that the ellipsoidal weight of any such codeword cannot be more than $-d_{\psi_l-1}\min_i|Z_l.\mathbf S_i|$, where $d_{\psi_l-1}$ is the minimum distance of $\mathcal C_{\psi_l-1}$. We propose to use this value for computing an estimate of  $R_{l'}$ of  the less probable path $l'$. If this path is later selected by the decoder for further processing, the corresponding codeword should be actually constructed. 

% \subsubsection{Ellipsoidal weight threshold}
% In order to reduce the number of iterations performed by the decoder and the cost of operations with the priority queue, it is beneficial to set  $B_{l'}=0$ in the end of $BackwardPass$ algorithm,
% if $R_{l'}<R_l-\Delta$, where $\Delta$ is some parameter. This ensures that the decoder does not expand subtrees of the code tree corresponding to low-probability codewords. 

\section{Decoding of outer codes}
\label{sOuterDecoding}
As described in Section \ref{sDecomp},
PD is applied recursively until one obtains outer codes, which allow efficient ML decoding. Consider some $(n,k)$ outer code $\mathcal C$. We need to construct a decoder, which can find the codewords $c^{(i)}\in \mathcal C$ in the increasing order of their ellipsoidal weight $E(c^{(i)},S_0^{n-1})$, where $S_0^{n-1}$ is the vector of LLRs. In \cite{trofimiuk2015block} outer codes are decoded with tree-trellis Viterbi algorithm. However, in many cases it is possible to use much simpler algorithms. 

 In this section we describe the decoding algorithms for outer codes, which frequently arise in PD\ of polar codes. 
 Some of the techniques presented below resemble those suggested in \cite{sarkis2016fast}, but we also consider some well-known  outer codes, most importantly first-order Reed-Muller and extended Hamming codes. 

\subsection{Low rate codes}
Decoding of $(n, 0), (n,1)$ and $(n,2)$ codes is performed by exhaustive enumeration of their codewords $c^{(i)}$, computing the corresponding ellipsoidal weight $E(c^{(i)},S_0^{n-1})$ for each codeword, and sorting them in the ascending order of $E(c^{(i)},S_0^{n-1})$.

%Observe that for any $(4,3)$ code the same technique can be employed. If %the $(4,4)$ code decoded as described in Section \ref{ssRate-1} 
\subsection{First order Reed-Muller and related codes}
\label{ssHadamard}
The first order Reed-Muller code $RM(1,\mu)$  is obtained as  a polar code with the frozen set $\hat{\mathcal F}=[2^\mu]\setminus\left(\set{0}\cup\set{2^i|0\leq
i<\mu}\right)$. List decoding of such codes can be implemented using the fast Hadamard transform (FHT) with complexity $O(n\log n)$ \cite{green1966serial}. 
FHT computes correlations $T(c^{(i)},S_0^{n-1})=\sum_{j=0}^{n-1}(-1)^{c^{(i)}_j}S_j$ for $n$ codewords of the corresponding codes. The correlations for the remaining codewords are given by $T(c^{(i+n)},S_0^{n-1})=-T(c^{(i)},S_0^{n-1})$, and $c^{(i+n)}=c^{(i)}+\mathbf 1$, where $\mathbf 1$ is a vector of 1's.
The ellipsoidal weight of a codeword is related to its correlation by 
$E(c^{(i)},S_0^{n-1})=\frac{1}{2}\left(\sum_{j=0}^{n-1}\abs{S_j}-
T(c^{(i)},S_0^{n-1})\right).$

Observe that obtaining two most probable codewords, which are in most cases sufficient for the BSDA, requires finding just two highest values $T(c^{(i)},S_0^{n-1})$.

Another type of outer codes, commonly arising in the PD of polar codes, is a concatenation of a first order Reed-Muller code $RM(1,\mu-t)$ and a $(2^t,1,2^t)$ repetition code. Such codes may be also decoded using the FHT of order $2^{\mu-t}$.
 We propose also to use FHT-based decoder for the case of codes given by a union of at most 4 cosets of a first order Reed-Muller code $\mathcal R$, i.e.
$\mathcal C=\mathcal R\cup(\mathcal R+c'),$
and
$\mathcal C=\mathcal R\cup(\mathcal R+c')\cup (\mathcal R+c'')\cup (\mathcal R+c'+c''),$
where $c',c''\notin\mathcal R$. This turns out to be more efficient in practice than  performing additional steps of PD. 

\subsection{Single parity check code}
\label{ssSPC}
\begin{table}
\caption{Test error patterns for single parity check code}
\begin{tabular}{|c|p{0.41\textwidth}|}\hline
$\mathbf T^{(1)}$&        $\set{ 0 }$, $\set{ 1 }$, $\set{ 2 }$, $\set{ 3 }$, $\set{ 0, 1, 2 }$, $\set{ 0, 1, 3 }$, $\set{ 0, 2, 3 }$, $\set{ 1, 2, 3 }$, $\set{ 4 }$, $\set{ 5 }$, $\set{ 6 }$, $\set{ 7 }$, $\set{ 0, 1, 4 }$, $\set{ 0, 1, 5 }$, $\set{ 0, 1, 6 }$, $\set{ 0, 2, 4 }$, $\set{ 0, 3, 4 }$, $\set{ 8 }$, $\set{ 9 }$, $\set{ 10 }$, $\set{ 11 }$, $\set{ 12 }$\\\hline        
        
$\mathbf T^{(0)}$&$\set{}$,$\set{ 0, 1 }$, $\set{ 0, 2 }$, $\set{ 0, 3 }$, $\set{ 1, 2 }$, $\set{ 1, 3 }$, $\set{ 2, 3 }$, $\set{ 0, 1, 2, 3 }$, $\set{ 0, 4 }$, $\set{ 0, 5 }$, $\set{ 0, 6 }$, $\set{ 0, 7 }$, $\set{ 1, 4 }$, $\set{ 1, 5 }$, $\set{ 1, 6 }$, $\set{ 1, 7 }$, $\set{ 2, 4 }$, $\set{ 2, 5 }$, $\set{ 2, 6 }$, $\set{ 3, 4 }$, $\set{ 3, 5 }$, $\set{ 0, 1, 2, 4 }$, $\set{ 0, 8 }$, $\set{ 0, 9 }$
        $\set{ 0, 10 }$,
        $\set{ 0, 11 }$\\\hline                   \end{tabular}
\label{tPatterns}
\end{table}
We perform decoding of $(n,n-1,2)$ codes by testing a few pre-defined error patterns $\mathcal E^{(i)}$.  This method is known as Chase-II decoding algorithm \cite{chase72class} and was used in \cite{sarkis2016fast}. First, the codeword symbols are arranged in the increasing order of their reliabilities, so that  $|S_{t[0]}|\leq |S_{t[1]}|\leq \dots |S_{t[n-1]}|$.  Second, a hard decision vector $\hat c$  is constructed, and its parity $p$ is calculated. Then the codewords are constructed as $c^{(i)}=\hat c+e^{(i)}$, where $e^{(i)}$ is the vector containing 1's on positions $t[\epsilon_{i,j}]$ and 0's elsewhere, for all $\mathcal E^{(i)}=\set{\epsilon_{i,0},\dots,\epsilon_{i,w_i}}\in \mathbf T^{(p)}$. The set of test error patterns $\mathbf T^{(p)}$ can be constructed either analytically using the expressions derived in \cite{fossorier1995softdecision}, or by simulations. Table \ref{tPatterns} presents the test error patterns used in BSDA. These patterns were obtained via simulations. It turns out that the same set of test error patterns can be used for decoding of codes of arbitrary length without any noticeable performance loss compared with the optimal decoder. 

The Chase-II decoding may result in performance degradation of BSDA, since the considered algorithm does not necessarily return true $L$ most reliable codewords. In this case one should increase the size of $\mathbf T^{(p)}$ and/or reduce the maximal allowed length of single parity check (SPC) code. 
Observe that the fast implementation of SCL in \cite{sarkis2016fast} uses only 8 test error patterns. In our implementation we use 26 ones. Furthermore, simulations show, that in most cases it is sufficient to identify the positions $t[0],t[1]$ of only two least reliable symbols. This can be done using the tournament algorithm \cite{KnuthArt3}.

\subsection{Double parity check codes}
 A $(n,n-2,2)$ polar code with the set of frozen symbol indices $\mF=\set{0,1}$ can be obtained by interleaving two $(n/2,n/2-1,2)$ codes.  This enables one to decode such codes using a combination of two decoders of a SPC code.

\subsection{Rate-1 code}
\label{ssRate-1}
For $(n , n)$ codes we propose to use the same decoding algorithm as for SPC codes (see Section \ref{ssSPC}). Moreover, simulations show that finding just 4 (out of $2^n$) most probable codewords  of $(n,n)$ code  does not result in any noticeable performance loss for considered error rates. These 4 most probable codewords is obtained by considering the following
error patterns: $\emptyset,\set{0}, \set{1}, \set{0,1}, \set{2}$. Their computation requires identification only $3$ smallest values $|S_j|, j \in [n]$.

\subsection{$(16,10,4)$, $(16,11,4)$ and $(16,12,2)$ codes}
These codes, obtained by Plotkin concatenation of $(8,4,4)$ or $(8,3,4)$ codes and $(8,7,2)$ or $(8,8,1)$ codes, commonly arise in the PD of polar codes.  Decoding of these codes can be implemented using the approach introduced in \cite{ivanov2016hybrid}.

\section{Block sequential decoding for polar subcodes}
\label{sPolSub}
\subsection{Dynamic frozen symbols}
It was suggested in \cite{trifonov2016polar} to set frozen symbols $u_i, i\in \mF$ not to zero, but to linear combinations of some other symbols,  i.e. 
\begin{equation} 
\label{mDynFrozen}
 u_i=\sum_{s=0}^{i-1}V_{j_i,s}u_s,\end{equation} 
  where $V$ is a $(n-k)\times n$ binary matrix, such that its rows end in distinct columns, and $j_i$ is the index of row with the last non-zero element in column $i$. Such symbols with non-trivial right hand side expressions are called dynamic frozen symbols (DFS), and the code obtained via considered construction are referred to as \textit{polar subcodes}. Decoding of such codes can be implemented by a straightforward generalization of the successive cancellation algorithm and SC-based algorithms.

 Properly constructed polar subcodes may have higher minimum distance than classical polar codes. This results in substantially better performance  \cite{trifonov2016polar,trifonov2018randomized} under the SCL algorithm and other SC-based algorithms. Polar codes with CRC \cite{tal2015list} can be considered as a special case of polar subcodes. 

A system of dynamic freezing constraints (DFC) may be constructed for any  linear code of length $2^m$ with check matrix $H$, by setting $V=QHA_m^T$, where $Q$ is a suitable invertible matrix. This enables one to decode such code with  the SCL algorithm \cite{trifonov2016polar}.

\subsection{Processing of dynamic frozen symbols}
\label{sDynFrozenProcessing}
Decoding of polar subcodes requires one to compute the  values of DFS, i.e. some linear combinations of symbols $v_i$ for any path $v_0^{\phi_{\psi_l}}$. The Tal-Vardy list decoding algorithm does not store these values explicitly. It is possible to express their values from the content of arrays $C_{l,\lambda}$. However, we employ an alternative approach, which is more efficient in practice.

In most cases, polar subcodes have only a few non-trivial DFS which depend on a small number of other symbols.  Let $f$ be the number of non-trivial equations \eqref{mDynFrozen} for the considered code. It can be assumed without loss of generality that these equations correspond to $f$ topmost rows of matrix $V$. Let $i_s \in \mF, 0\leq s<f$, be the indices of the corresponding dynamic frozen symbols. 
Let $\mathcal P=\set{j\:|\:V_{s,j}=1,0\leq j<i_s,0\leq s<f}$ be the set of indices of symbols participating in any of the DFC.
  
We propose to allocate boolean variables $w_{l,s}, 0\leq l<D$ for each path, initialize them to 0 at decoder startup, and flip the value of $w_{l,s}$ at each phase $j<i_s$, such that $V_{s,j}=1$ and $v_{j}=1$, where $v_{j}$ is the value of the $j$-th symbol on the $l$-th path. Then at phase $i_s$ the value of $w_{l,s}$ is exactly the value of the $s$-th DFS for the corresponding path.
 However, the above described BSDA does not compute explicitly the values $v_{j}$. But one can obtain these values as $v_{j}=(\hat c A_{m_{\psi_l}})_{j\bmod 2^{m_{\psi_l}}}$, where $\hat c$ is a codeword of an outer code obtained for path $l$ at block $\psi_l$. This approach is illustrated in Figure \ref{fPrepareDFEvaluation}.
\begin{figure}
\small
\begin{algorithm}{PrepareForDFEvaluation}{l,\hat c}
\begin{FOR}{j\in \mathcal P\cap\set{j\:|\:\phi_{\psi_l}-2^{m_{\psi_l}}<j'\leq \phi_{\psi_l}}}
v_{j} \leftarrow (\hat c A_{m_{\psi_l}})_{j\bmod 2^{m_{\psi_l}}}\\
\begin{IF}{v_{j}=1}
\begin{FOR}{i\=0 \TO f-1}
w_{l,i}\=w_{l,i}+V_{i,j}
\end{FOR}
\end{IF}
\end{FOR}
\end{algorithm}
\caption{Accumulating the values of dynamic frozen symbols}
\label{fPrepareDFEvaluation}
\end{figure}
Observe that lines 2 and 4--5 of the algorithm can be efficiently implemented via bit mask manipulation techniques. 

If there is a non-trivial DFS in some block $\psi_l$, i.e. $\phi_{\psi_l}-2^{m_{\psi_l}}<i_{s}\leq \phi_{\psi_l}$ for some $s$, and $v_{l,s}=1$ when the decoder reaches this block, then one should perform decoding in a non-trivial coset of the corresponding outer code. The coset representative is given by 
\begin{equation}
\mathbf p_s=\sum_{\substack{i=\phi_{\psi_l}-2^{m_{\psi_l}}+1\\V_{s,i}=1}}^
{\phi_{\psi_l}}(A_{m_{\psi_l}})_{i\bmod 2^{m_{\psi_l}},-},
\end{equation}
where $B_{i,-}$ denotes the $i$-th row of matrix $B$. 

We introduce the algorithm \textit{GetCoset}, which computes the value
$$
\mathbf p=\sum_{s\in \mathfrak{S}}
w_{l,s}\mathbf p_s,
$$
where 
$
\mathfrak{S} =\set{s|\phi_{\psi_l}-2^{m_{\psi_l}}<i_{s}\leq \phi_{\psi_l}}.
$ 
During the decoding process, \textit{GetCoset} should be called before the line 3 of the \textit{ForwardPass} function (Figure \ref{fForwardPassBSDA}) and  adjust the signs of the LLRs $\mathbf S$, i.e. $\mathbf S_i \leftarrow \mathbf S_i (-1)^{\mathbf p_i}$. After that, each codeword $\hat c$ returned from the outer decoder (including \textit{BackwardPass} function) should be corrected according to $\mathbf p$, i.e. $\hat c \leftarrow \hat c + \mathbf p$. Note that the vectors $\mathbf p_s$ can be precomputed offline. The corrected vector $\hat c$ should be passed to \textit{PrepareForDFEvaluation} function to update values $w_{l,s}$.

\begin{figure*}[t]
%\centering
\begin{subfigure}{0.5\textwidth}
\includegraphics[width=\textwidth]{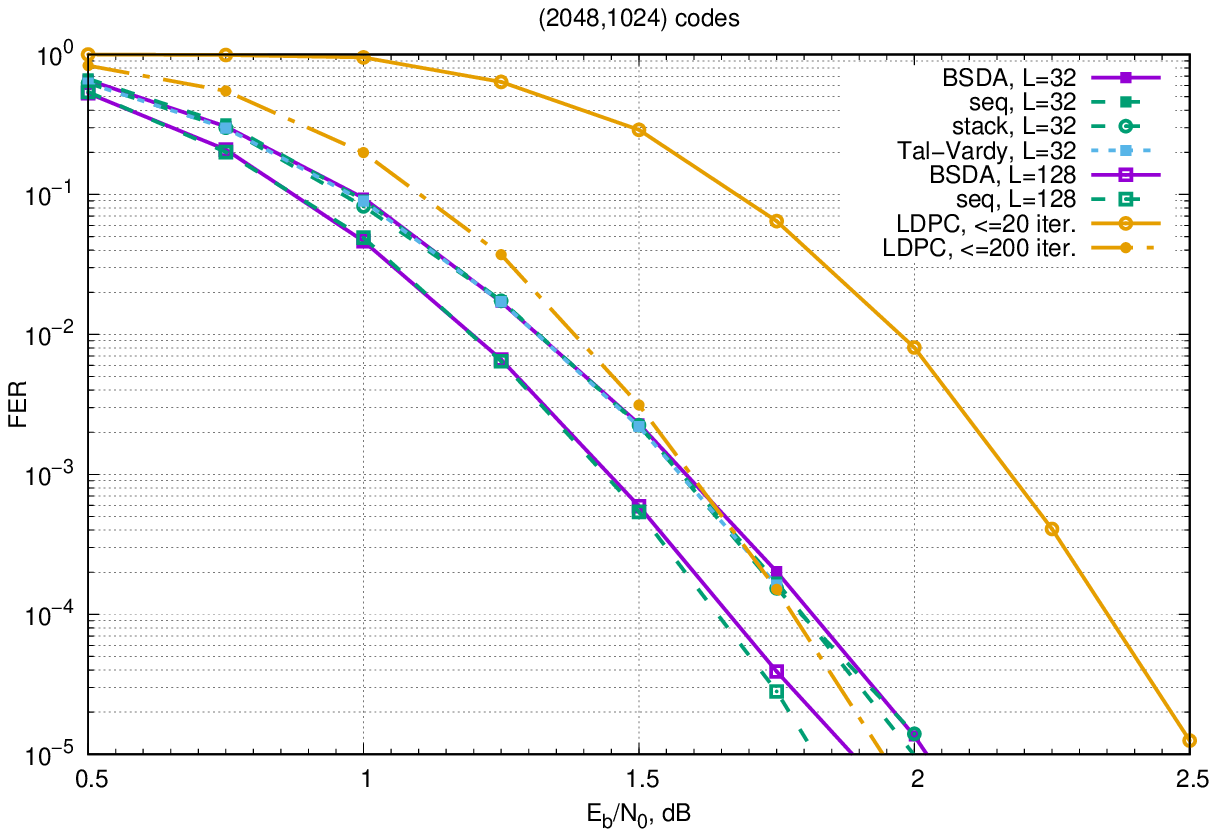}
\caption{Performance BSDA    }\label{fPerformance}
\end{subfigure}
\begin{subfigure}{0.5\textwidth}
\includegraphics[width=\textwidth]{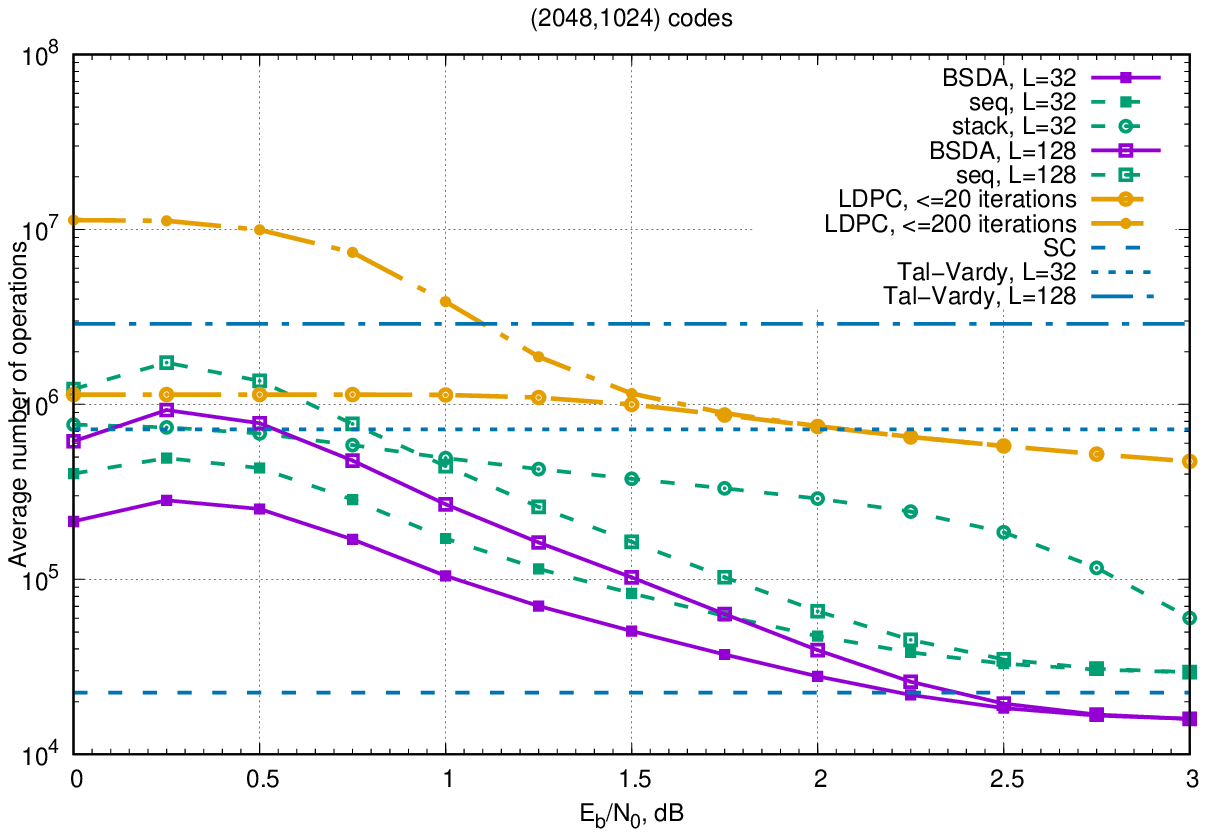}
\caption{Complexity  }\label{fAvgComplexity}
\end{subfigure}
\caption{Performance and complexity   of decoding algorithms for $(2048,1024)$ codes }
\end{figure*}

\section{Complexity analysis}
\label{sComplexity}
%\vspace{-2.5mm}
Consider  block sequential decoding of $(n=2^m,k)$ polar code. Let $\mathcal V$ be a number of outer codes. The worst-case complexity of the proposed decoding algorithm corresponds to the case when  exactly $L\mathcal V$ iterations are performed, i.e. $q_\psi=L, 0\leq \psi<\mathcal V$. In this case the number of operations performed by the decoder is given by 
\begin{equation}
\label{mComplexity}
C\leq L\sum_{\psi=0}^{\mathcal V-1}\left(C_\psi'+C_{\psi-1}''+
\Lambda(m-m_\psi,\floor{\phi_{\psi}/2^{m_{\psi}}})\right),
\end{equation}
where $C_\psi'$ is the complexity of a call to \textit{Preprocess} and \textit{GetNextCodeword} (see \textit{ForwardPass} function)  for outer code $\mathcal C_\psi$,  $C_\psi''$ is the complexity of subsequent calls\footnote{We assume $C_{-1}''=0$.} to \textit{GetNextCodeword} (see \textit{BackwardPass}).  Here  $\Lambda(\lambda,\phi)=2^{m-\lambda}(2^{d+1}-1)$ is the complexity of  computing  $S_{\lambda}^{(\phi)}$ via function \textit{IterativelyCalcS}, where $d=d(\phi)<\lambda$ is the maximal integer, such that $2^{d(\phi)}|\phi$.
\\\indent

Application of the proposed approach makes sense only if  \textit{Preprocess} and \textit{GetNextCodeword} functions provide a simpler way to obtain $L$ most probable codewords of $\mathcal C_\psi$ compared with the Tal-Vardy algorithm\footnote{$L$ must be sufficiently large to ensure that the Tal-Vardy algorithm always finds $L$ most probable codewords.} with list size $L$. Hence, the worst-case complexity of the proposed approach can be upper-bounded by considering the case (this corresponds to the algorithm presented in \cite{miloslavskaya2014sequential}) of $m_i=0$. In this case one has $C_\psi'=C_\psi''=0, 0\leq \psi<\mathcal V=n$, and 
$$
C\leq L\sum_{\phi=0}^{n-1}\Lambda(m,\phi)=\sum_{\phi=0}^{n-1}(2^{d(\phi)+1}-1).
$$

For any $d< m-1$  there are $2^{m-d-1}$ integers $\phi<2^m$ divisible by $2^d$ (and $2$ of them for $d=m-1$), but not divisible by $2^{d+1}$. Hence, one obtains 
$$
\label{mComplexityBound}
C\leq L(2^m-1+\sum_{d=0}^{m-1}2^{m-d-1}(2^{d+1}-1))=Lm2^m,
$$
which is identical to the complexity of the SCL decoding. The best case complexity corresponds to the case when the decoder visits each block 
%exactly
once, so it is given by $n \log_2 n$ with $L=1$.

%\indent\
There are additional costs associated with PQ operations. With appropriate implementation \cite{Cormen2001introduction,yakuba2015multilevel}, their complexity is upper bounded by $O(D\mathcal V)$.

\section{Numeric results}
\label{sNumRes}

Figure \ref{fPerformance} illustrates the performance of the proposed BSDA. Simulations were run for the case of AWGN\ channel, BPSK\ modulation and randomized polar subcode \cite{trifonov2017randomized}.
For comparison, we report also the performance of list \cite{tal2015list}, sequential \cite{trifonov2018score} and min-sum stack \cite{chen2013improved} decoding algorithms for the same code, and the CCSDS LDPC code under belief propagation decoding. 
It can be seen that the proposed algorithm provides essentially the same performance as the sequential and Tal-Vardy algorithms.  Furthermore, for $L=32$ its performance is close to that of the LDPC code with at most 200 decoder iterations. Even better performance  is obtained for $L=128$.  
\begin{figure}[h]
\centering
\includegraphics[width=0.49\textwidth]{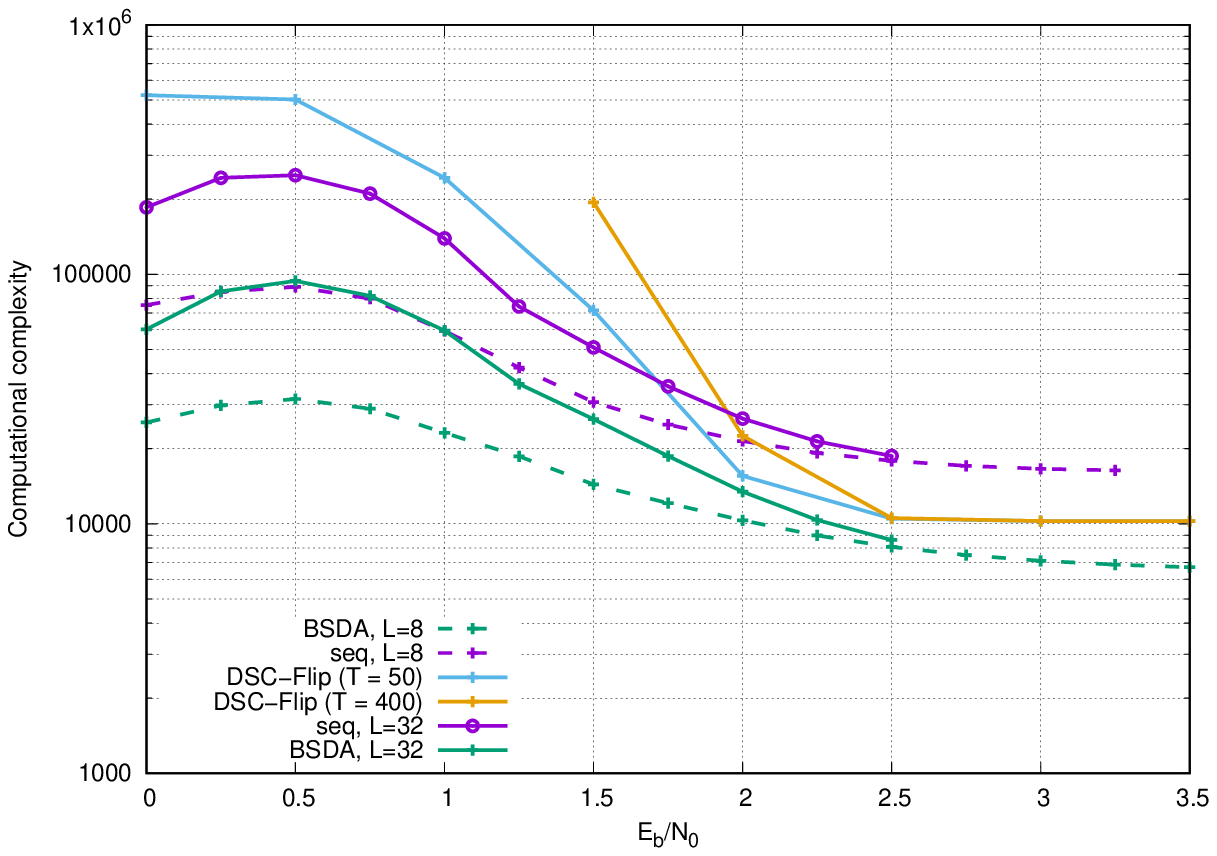}
\caption{Complexity of decoding algorithms, (1024,512) code}
\label{SCFlip}
\end{figure}

\begin{figure*}[h]
%\centering
\begin{subfigure}{0.5\textwidth}
\includegraphics[width=\textwidth]{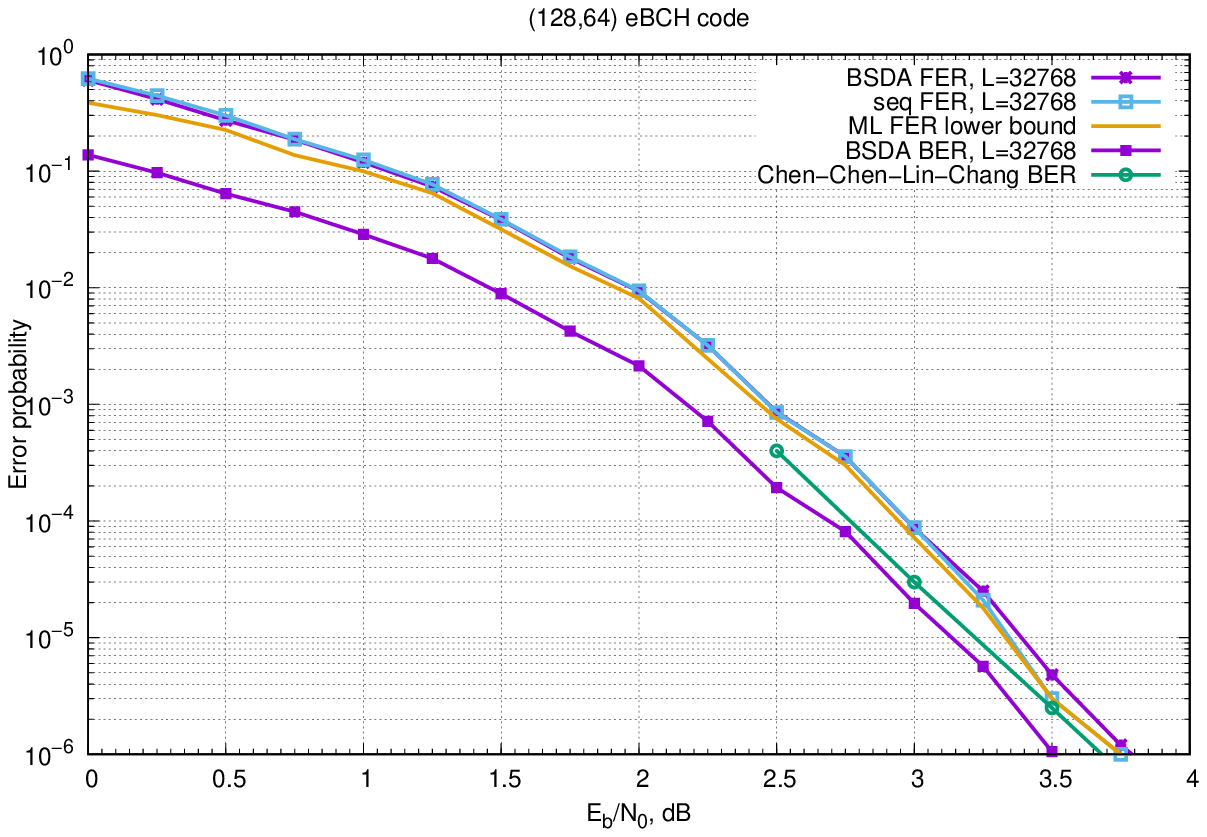}   
\caption{Performance}\label{fBCHPerformance}
\end{subfigure}
\begin{subfigure}{0.5\textwidth}
\includegraphics[width=\textwidth]{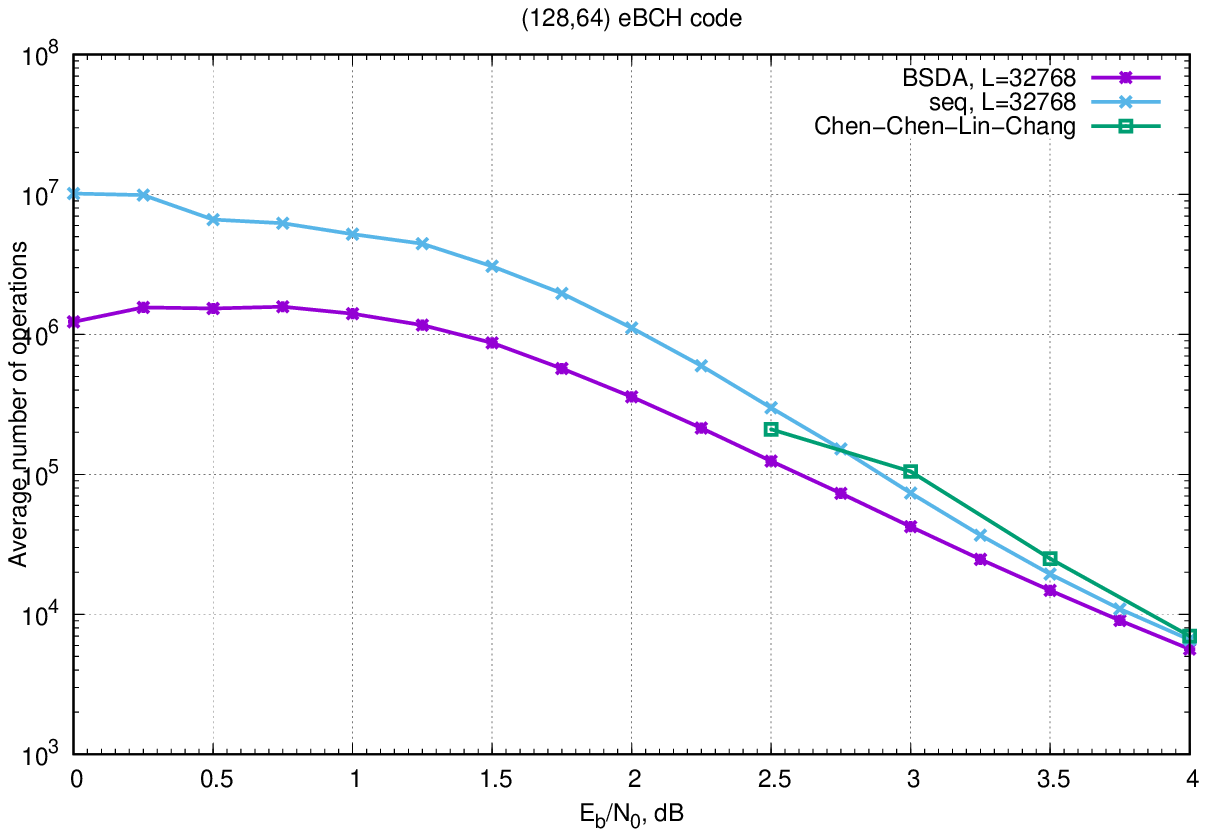}
\caption{Complexity}\label{fBCHComplexity}
\end{subfigure}
\caption{Block sequential  decoding    of $(128,64)$ eBCH  code }
\label{fBCH}
\end{figure*}

Figure \ref{fAvgComplexity} illustrates the average number of summation and comparison operations performed by the considered algorithms. 
It can be seen that the complexity of the SDA is much lower compared with the original stack algorithm (which corresponds to $\Psi(\phi)=0, 0\leq \phi<n$).
Furthermore, the average complexity of the block sequential algorithm  converges quickly to a value slightly less than $n\log_2 n$, the complexity of the SC algorithm. The complexity of the proposed algorithm is 1.5--2 times lower compared with that of the sequential decoder, and substantially lower compared with $Ln\log_2 n$, the complexity of the Tal-Vardy list decoding algorithm, and the average complexity of the min-sum stack decoding algorithm. It is also substantially lower compared with the complexity of the BP decoder for the LDPC code.
Observe that reducing the maximal number of iterations for the BP\ algorithm results in a noticeable performance degradation without significant complexity reduction for $FER<0.1$. 

Figure \ref{SCFlip} presents  the average complexity of BSDA, sequential and SC Flip  \cite{chandesris2018dynamicscflip} decoding algorithms. It can be seen that the complexity of the sequential decoder with $L=8$ becomes higher than D-SCFlip with parameter $T = 50$, which corresponds to $L=8$ in SCL.
On the contrary, BSDA has lower decoding complexity even with $L = 32$.

As mentioned in Section \ref{sPolSub}, any binary linear block code can be decoded with the proposed algorithm, although the performance of such a decoder depends strongly on the structure of the corresponding frozen set. eBCH codes were shown to have sufficiently low SC decoding error probability \cite{trifonov2016polar}, and are therefore well-suited for decoding using the BSDA. Figure \ref{fBCH}  illustrates performance and complexity of the decoding of $(128,64,22)$ eBCH code. For comparison, we report also the results for Chen-Chen-Lin-Chang algorithm (a sequential-type trellis-based decoding method), reproduced from  \cite{chen2015algorithms}. It can be seen that the BSDA provides lower decoding complexity.

\begin{figure*}[h]
%\centering
\begin{subfigure}{0.5\textwidth}
\includegraphics[width=\textwidth]{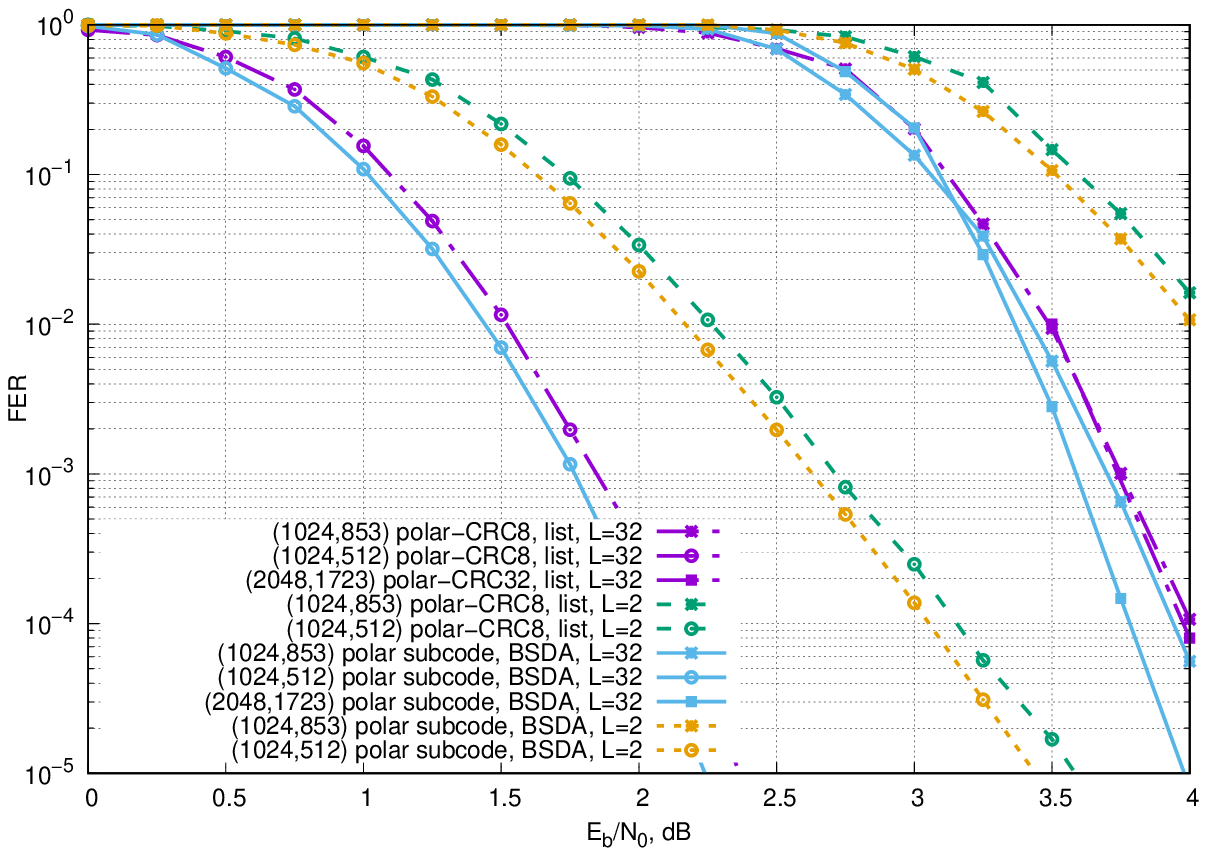}
\caption{Performance }\label{fSarkisPerformance}
\end{subfigure}
\begin{subfigure}{0.5\textwidth}
\includegraphics[width=\textwidth]{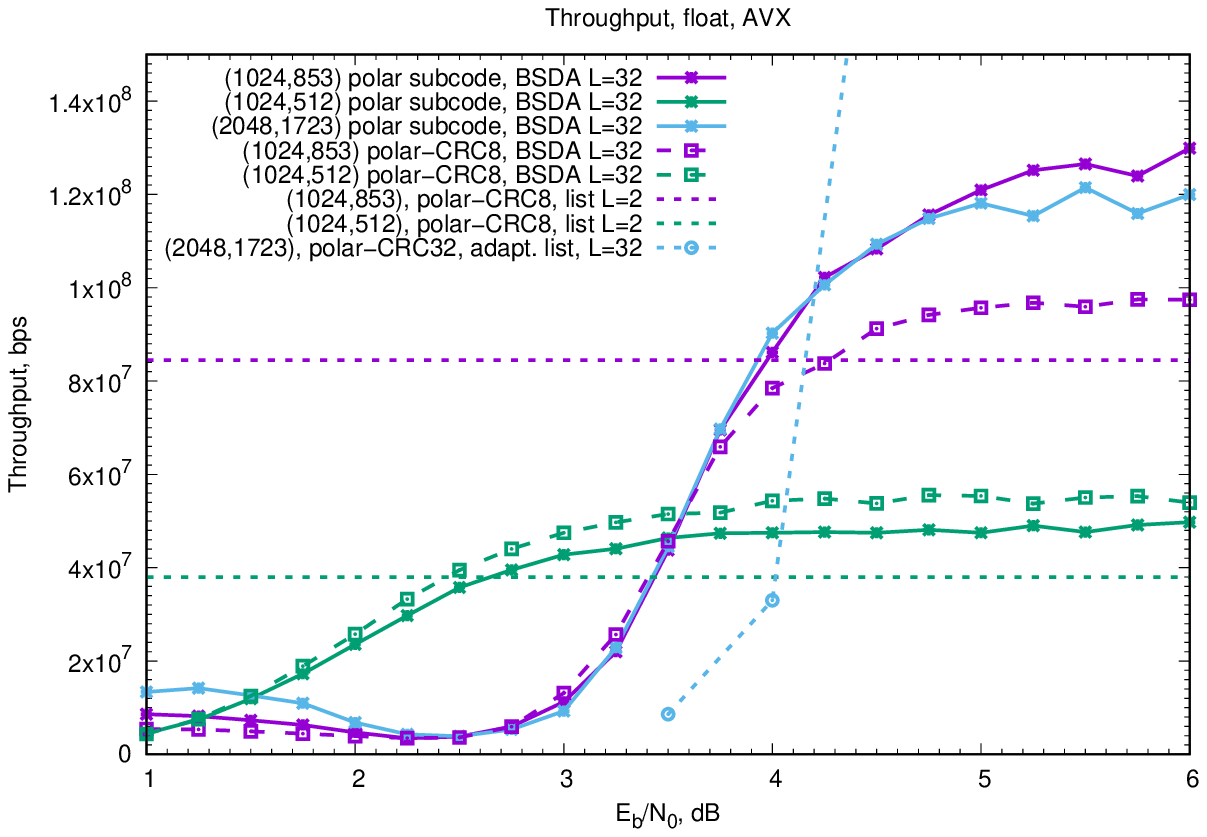}
\caption{Throughput of the software implementation }
\end{subfigure}
\caption{Performance and throughput of block sequential and fast list decoding algorithms  }
\label{fSarkisFight}
\end{figure*}
Figure \ref{fSarkisFight} illustrates the performance and throughput of the software implementation of the proposed BSDA, as well as fast list and adaptive list (ASCL) decoding algorithms introduced in \cite{sarkis2016fast}, 
for the case of polar subcodes and polar codes with CRC-8. 
 Simulations were performed on Intel Core i7-2600K CPU running at 3.4 GHz with maximum turbo frequency 3.8 GHz. SIMD techniques introduced in \cite{gal2015multigbs,sarkis2016fast}, based on single-precision floating point arithmetic, were used to implement LLR computation in the proposed algorithm. Throughput results for the fast and ASCL decoding algorithms are reproduced from \cite{sarkis2016fast}. The performance of polar codes with CRC under the BSDA is very close to that of the list decoder with the same $L$, and is therefore not shown. 
As it may be expected, polar subcodes provide better performance than polar codes with CRC, and increasing list size $L$ results in better performance. One can see that, for polar subcodes, at sufficiently high SNR the proposed BSDA even for $L=32$ provides the same or even better average throughput as the fast list decoding algorithm introduced in \cite{sarkis2016fast} for polar codes with CRC and $L=2$. Furthermore, at high SNR the throughput of BSDA for polar codes with CRC exceeds that  of the fast list decoding algorithm.   Observe that the algorithm presented in \cite{sarkis2016fast} relies on unrolling to eliminate redundant calculations, i.e. the decoder is specific for each code. The proposed BSDA is generic, but still provides higher throughput despite of much more sophisticated flow control structure. 

It can be also seen that for a $(2048,1723)$ polar subcode and $E_b/N_0<4.2$ dB the BSDA provides higher throughput and substantially better performance compared with the ASCL decoding algorithm \cite{sarkis2016fast} for a polar code with CRC-32. However, for higher values of $E_b/N_0$ the throughput of the ASCL decoding becomes much higher. The reason for this is that in this case with high probability the decoding is successful already with $L=1$ (i.e. with plain SC decoding), and this can be easily verified by CRC. Hence, highly complex list decoder is almost not used. It is, however, not clear how to extend the idea of adaptive list decoding to the case of polar subcodes, which provide much better performance.

Table \ref{tMemory}
presents  the amount of memory used by the decoder in various scenarios. The value of $\Xi$  is the maximal amount of memory (in Kilobytes) sufficient for storing arrays $S,C$, and outer decoder state variables $Z$ from the common memory pools, described in Appendix \ref{sMemory}. The values of parameters $L$, $D$ were selected to minimize overall memory demand during block sequential decoding, while ensuring that the performance does not degrade with respect to the case of $D=Lk$, which corresponds to the maximal possible memory footprint. Minimization for each code was carried out for FER at $10^{-3}$. Observe that for an SCL\ decoder one needs to store $Ln$ LLRs $S$ and  $2Ln$ partial sums $C$. For a software implementation, this results in $6Ln$ bytes of storage. The corresponding values are shown as $\Xi_{TV}$ in the table. It can be seen that $\Xi/\Xi_{TV}$ decreases with code length.

\begin{table}[]

        \caption{Decoder memory requirements }
        \label{tMemory}\footnotesize
        \centering
        \begin{tabular}{|l|l|l|l||l||l|l|l|}
                \hhline{----||----}
                {$L$} & {$D$}  & {$\Xi$} &{$\Xi_{TV}$} &  {$L$} & {$D$}  & {$\Xi$, KB}&{$\Xi_{TV}$} \\ \hhline{----||----}
                \multicolumn{4}{|c||}{ (1024, 512, 28)} & \multicolumn{4}{|c|}{ (16384, 8192, 48)} \\ \hhline{----||----}
                8      & 70                      &   385 & 49 & 8      & 250                     &   4764  & 786      \\ \hhline{----||----}
                32     & 240                    &   1457  &197& 32     & 900                    &   18617   &  3146   \\ \hhline{----||----}
                256    & 1620                   &   11254 &1572& 256    & 8230                   &   160791    &  25165 \\ \hhline{----||----}
                \multicolumn{4}{|c||}{ (2048, 1024, 48)} & \multicolumn{4}{|c|}{ (2048, 683, 52)}    \\ \hhline{----||----}
                8      & 100                    &   786    & 98   & 8      & 100                    &   681      & 98  \\ \hhline{----||----}
                32     & 370                    &   3071  &   393  & 32     & 400                   &   2686       & 393\\ \hhline{----||----}
                256    & 3020                   &   24215  &3146    & 256    & 2450                   &   20859     & 3146 \\ \hhline{----||----}
%                 \multicolumn{3}{|c||}{Polar subcode (4096, 2048, 48)}  & \multicolumn{3}{|c|}{Polar subcode (2048, 1440, 24)}  \\ \hhline{----||----}
%                 8      & 130                    &   1332        & 8      & 110                    &   813         \\ \hhline{----||----}
%                 32     & 700                   &   5523        & 32     & 330                   &   3160        \\ \hhline{----||----}
%                 256    & 4470                  &   43303       & 256    & 2460                   &   25334       \\ \hhline{----||----}
%                 \multicolumn{3}{|c||}{Polar subcode (8192, 4096, 48)}  & \multicolumn{3}{|c|}{Polar subcode (8192, 6553, 18)}  \\ \hhline{----||----}
%                 8      & 200                     &   2374        & 8      & 200                     &   2851        \\ \hhline{----||----}
%                 32     & 710                    &   9175        & 32     & 910                     &   11252       \\ \hhline{----||----}
%                 256    & 6250                   &   73493       & 256    & 5870                   &   91052       \\ \hhline{----||----}
        \end{tabular}
\end{table}

\section{Conclusions}
In this paper the block sequential decoding algorithm was introduced. It employs blockwise processing of the input symbols of the polarizing transformation. The processing
%It was shown that the input symbols of the polarizing transformation can %be processed blockwise, and the processing 
operation reduces to on-demand construction of codewords of the codes arising in the Plotkin decomposition of the code being decoded. A set of such codes was identified, which admits low complexity list decoding. 

It was shown that the proposed block sequential decoding algorithm has lower complexity than the sequential, stack and list decoding algorithms, while having approximately the same performance. At sufficiently high SNR, the throughput of the software implementation of the proposed algorithm exceeds the throughput of the fast list decoder with much smaller list size, i.e. the proposed algorithm provides better performance and lower decoding complexity compared with the list decoding algorithm by Sarkis et al \cite{sarkis2016fast}. The proposed algorithm can be used for decoding of polar (sub)codes, polar codes with CRC and short eBCH\ codes.

\appendix
\label{sImplementation}
\section{Low-Level Algorithms}
\label{sLowLevelAlg}
\subsection{Data structures and basic procedures}
The proposed decoding algorithm can be implemented using the techniques suggested in \cite{tal2015list}. However, several simplifications are possible.
Let $l,\lambda,\phi,\beta$ denote the path, layer, phase and branch number, respectively. Each path is associated with arrays of intermediate LLRs $S_{l,\lambda}[\beta], 0\leq l<D, 0\leq \lambda\leq m-\mu,0\leq \beta<2^{m-\lambda},$ where $D$ is the maximal number of paths considered by the decoder (i.e. the maximal size of the PQ), and $2^\mu, 0\leq \mu<m,$ is the length of the shortest outer code in the Plotkin decomposition tree, i.e. $\mu = \min_{\psi \in[\mathcal V]} m_\psi$. Each path is also associated with value $R_l$, which contains values $R(u_0^\phi|y_0^{n-1})$, similarly to \cite{balatsoukasstimming2015llrbased,balatsoukasstimming2014hardware}.

It was suggested in \cite{tal2015list} to store the arrays of partial sum tuples $C_{l,\lambda}[\beta][\phi\bmod 2]$. We propose to rename these arrays to $C_{l,\lambda,\phi\bmod 2}[\beta]$. By examining the \textit{RecursivelyUpdateC} algorithm presented in \cite{tal2015list}, one can see that $C_{l,\lambda,1}[\beta]$ is just copied to $C_{l,\lambda-1,\psi}[2\beta+1]$ for some $\psi\in\set{0,1}$, and this copy operation terminates on some layer $\lambda'$. Observe that  $\lambda-\lambda'$
is equal to the maximal integer  $d$, such that $\phi+1$ is divisible by $2^d$.
 Therefore, we propose to co-locate $C_{l,\lambda,1}[\beta]$  with $C_{l,\lambda_0,0}[\beta]$. In this case the corresponding pointers are given by $C_{l,\lambda,1}=C_{l,\lambda_0,0}+2^{m-\lambda}(2^{\lambda-\lambda'}-1)$. This not only results in the reduction of the amount of data stored by a factor of two, but also enables one to   avoid "copy on write"  operation (see line 6 of Algorithm 9 in \cite{tal2015list}). Therefore, we write $C_{l,\lambda_0}$ instead of $C_{l,\lambda_0,0}$ in what follows.

We use the array pointer mechanism suggested in \cite{tal2015list}  to avoid data copying. However, we distinguish the case of read and write data access.  Retrieving  read-only  pointers is performed by functions \textit{GetArrayPointerC\_R}$(l,\lambda)$ and \textit{GetArrayPointerS\_R}$(l,\lambda)$ shown in Figure \ref{fROAccess}. 
Retrieving writable pointers is performed by function \textit{GetArrayPointerW}$(T,l,\lambda)$, where $T\in \set{'C','S'}$ shown in Figure \ref{fWAccess}. This function implements reference counting mechanism similar to that proposed in \cite{tal2015list}.
It is discussed in more details in Section \ref{sMemory}. 

 Figures \ref{fIterativelyCalcS} and \ref{fIterativelyUpdateC} present
 iterative algorithms for computing $S_{l,\lambda}[\beta]$ and $C_{l,\lambda}[\beta]$.
 These algorithms resemble the recursive ones given in \cite{tal2015list}. However, the proposed implementation avoids costly array dereferencing operations. 

\begin{figure}
\small
\begin{subfigure}{0.5\textwidth}
\begin{algorithm}{IterativelyCalcS}{l,\lambda,\phi}
d\=\max\set{0\leq d'\leq \lambda-1|\phi \text{is divisible by $2^{d'}$}}\\
\lambda'\=\lambda-d\\
S'\=\CALL{GetArrayPointerS\_R}(l,\lambda'-1)\\
N\=2^{m-\lambda'}\\
\begin{IF}{\text{$\phi 2^{-d}$ is odd}}
\tilde C\=\CALL{GetArrayPointerC\_R}(l,\lambda')\\
S''\=\CALL{GetArrayPointerS\_W}(l,\lambda')\\
S''[\beta]\=\CALL{P}(\tilde C[\beta],S'[\beta],S'[\beta+N]), 0\leq \beta<N\\
S'\=S'';\lambda'\=\lambda'+1;N \leftarrow N/2
\end{IF}\\
\begin{WHILE}{\lambda'\leq \lambda}
S''\=\CALL{GetArrayPointerS\_W}(l,\lambda')\\
S''[\beta]\=\CALL{Q}(S'[\beta+N],S'[\beta]), 0\leq \beta<N\\
S'\=S'';\lambda'\=\lambda'+1;N\=N/2
\end{WHILE}
\end{algorithm}
\caption{Computing  $S_{\lambda}^{(\phi)}(u_0^{\phi-1},y_0^{N-1})$}
\label{fIterativelyCalcS}
\vspace{0.5cm}
\end{subfigure}
%\end{figure}
%\begin{figure}
\small
\begin{subfigure}{0.5\textwidth}
\begin{algorithm}{IterativelyUpdateC}{l,\lambda,\phi}
\delta\=\max\set{d|\phi+1 \text{is divisible by
$2^{d}$}}\\
\tilde C\=\CALL{GetArrayPointerC\_W}(l,\lambda-\delta,0)\\
N\=2^{m-\lambda};
\tilde C=\tilde C+N(2^\delta-2); C''\=\tilde C+N;\\
\lambda'\=\lambda-\delta\\
\begin{WHILE}{\lambda>\lambda'}
C'\=\CALL{GetArrayPointerC\_R}(l,\lambda)\\
\tilde C[\beta]\=C'[\beta]\oplus C''[\beta], 0\leq \beta<N\\
N\=2N;
C''\=\tilde C;
\tilde C\=\tilde C-N\\
\lambda\=\lambda-1
\end{WHILE}
\end{algorithm}
%\vspace{3cm}
\caption{Updating $C$ arrays}
\label{fIterativelyUpdateC}
\end{subfigure}
\caption{Computing LLRs and partial sums}
\end{figure}

\subsection{Memory management}
\label{sMemory}
\begin{figure}
\small
%\begin{tabular}{ll}
\parbox{0.5\textwidth}{
\begin{algorithm}{GetArrayPointerW}{T,l,\lambda}
p\=PathIndex2ArrayIndex[l,\lambda]\\
\begin{IF}{p=-1}
p\=\CALL{Allocate}(\lambda)
\ELSE
\begin{IF}{ArrayReferenceCount[p]>1}
ArrayReferenceCount[p]--\\
p\=\CALL{Allocate}(\lambda)
\end{IF} 
\end{IF}\\
\RETURN ArrayPointer[T][p]\ 
\end{algorithm}}%&\parbox{0.5\textwidth}
{
\begin{algorithm}{GetArrayPointerS\_W}{l,\lambda}
\RETURN \CALL{GetArrayPointerW}('S',l,\lambda)
\end{algorithm}
\begin{algorithm}{GetArrayPointerC\_W}{l,\lambda,\phi}
\delta\=\max\set{d|\phi+1 \text{is divisible by
$2^{d}$}}\\
C\=\CALL{GetArrayPointerW}('C',l,\lambda-\delta)\\
\begin{IF}{\phi\equiv 1\bmod 2}
C\=C+2^{m-\lambda}(2^{\delta}-1)
\end{IF}\\
\RETURN C
\end{algorithm}}
%\end{tabular}
\caption{Write access to the data}
\label{fWAccess}
\end{figure}

\begin{figure}
%\begin{tabular}{ll}
%\scalebox{0.85}{\parbox{0.5\textwidth}
%{
\small
\begin{algorithm}{GetArrayPointerS\_R}{l,\lambda}
\RETURN\ ArrayPointer['S'][PathIndex2ArrayIndex[l,\lambda]]
\end{algorithm}
%&\scalebox{0.85}{\parbox{0.5\textwidth}
{
\begin{algorithm}{GetArrayPointerC\_R}{l,\lambda}
\RETURN\ ArrayPointer['C'][PathIndex2ArrayIndex[l,\lambda]]
\end{algorithm}
}%}
%\end{tabular}
\caption{Read-only access to the data}
\label{fROAccess}
\end{figure}

Many paths considered by the proposed algorithm share common values of $S_{l,\lambda}[\beta]$ and $C_{l,\lambda}[\beta]$, similarly to SCL decoding. To avoid duplicate calculations
one can use the same shared memory data structures. That is, for each path $l$ and for each layer $\lambda$ we store the index of the array containing the corresponding values $S_{l,\lambda}[\beta]$ and $C_{l,\lambda}[\beta]$. This index is given by $p=$\textit{PathIndex2ArrayIndex}$[l,\lambda]$, so that the corresponding data can be accessed as \textit{ArrayPointer}$[T][p], T\in\set{'S','C'}$.
Furthermore, for each integer $p$ we maintain the number of references to this array \textit{ArrayReferenceCount}$[p]$. If the decoder needs to write the data into an array, which is referenced by more than one path, a new array needs to be allocated. Observe that there is no need to copy anything into this array, since it will be immediately overwritten. This is an important advantage with respect to the implementation described in \cite{tal2015list}. However, the sequence of array read/write and stack push/pop  operations still satisfies the validity assumptions introduced in \cite{tal2015list}, so the proposed algorithm can be shown to be well-defined by exactly the same reasoning as the original SCL.

\begin{figure}
\small
\begin{algorithm}{Allocate}{\lambda}
[t,q]\=Pop(InactiveArrayIndices[\lambda]);\\
t\=t(m+1)+\lambda; 
ArrayReferenceCount[t]=1\\
\begin{IF}{q=1}
\begin{IF}{\Phi>\Lambda}
\CALL{Abort}
\end{IF}\\
ArrayPointer['S'][t]=PoolS+\Phi;\\ 
ArrayPointer['C'][t]=PoolC+\Phi\\
\Phi+=2^{m-\lambda}
\end{IF}\\
\RETURN t
\end{algorithm}
\caption{Adaptive memory allocation}
\label{fAllocate}
\end{figure}
Only one path considered by the decoder is constructed to the full length $n$. For most of the paths, only a few symbols are constructed before these paths are abandoned, i.e. either stored without being accessed till the decoder terminates, or killed. Hence, one does not need to provide the memory needed to accommodate all $D$ paths.
Therefore, we propose to create common memory pools for arrays $C$ and $S$, denoted \textit{PoolC} and \textit{PoolS}, respectively. If a new array needs to be provisioned, a part of memory pool is assigned to it. Arrays $C$ and $S$ are provisioned simultaneously.  Let $\Phi$ denote the amount of memory consumed from these pools.  If $\Phi$ exceeds the size of the memory pools $\Lambda$, then decoding needs to be terminated. For sufficiently large $\Lambda$ this typically occurs after the correct path has been killed by the decoder. 

The stack \textit{InactiveArrayIndices} stores pairs $[t, q]$, where $t$ is an index of an array, and q is true if the array is not allocated yet.
If the number of references to some array drops to 0, then the index of the array is saved in a stack of unused arrays, similarly to \cite{tal2015list}, so that it can be re-used later.   The indices of unused arrays corresponding to different layers are stored in different stacks \textit{InactiveArrayIndices}$[\lambda]$, since these arrays have different sizes. Allocation from the common pools occurs only if the corresponding index is extracted for the first time ($q=1$ in  Figure \ref{fAllocate}, which illustrates  the proposed approach).

\bibliographystyle{IEEETran}

% Generated by IEEEtran.bst, version: 1.14 (2015/08/26)

\begin{IEEEbiography}[{\includegraphics[width=\textwidth]{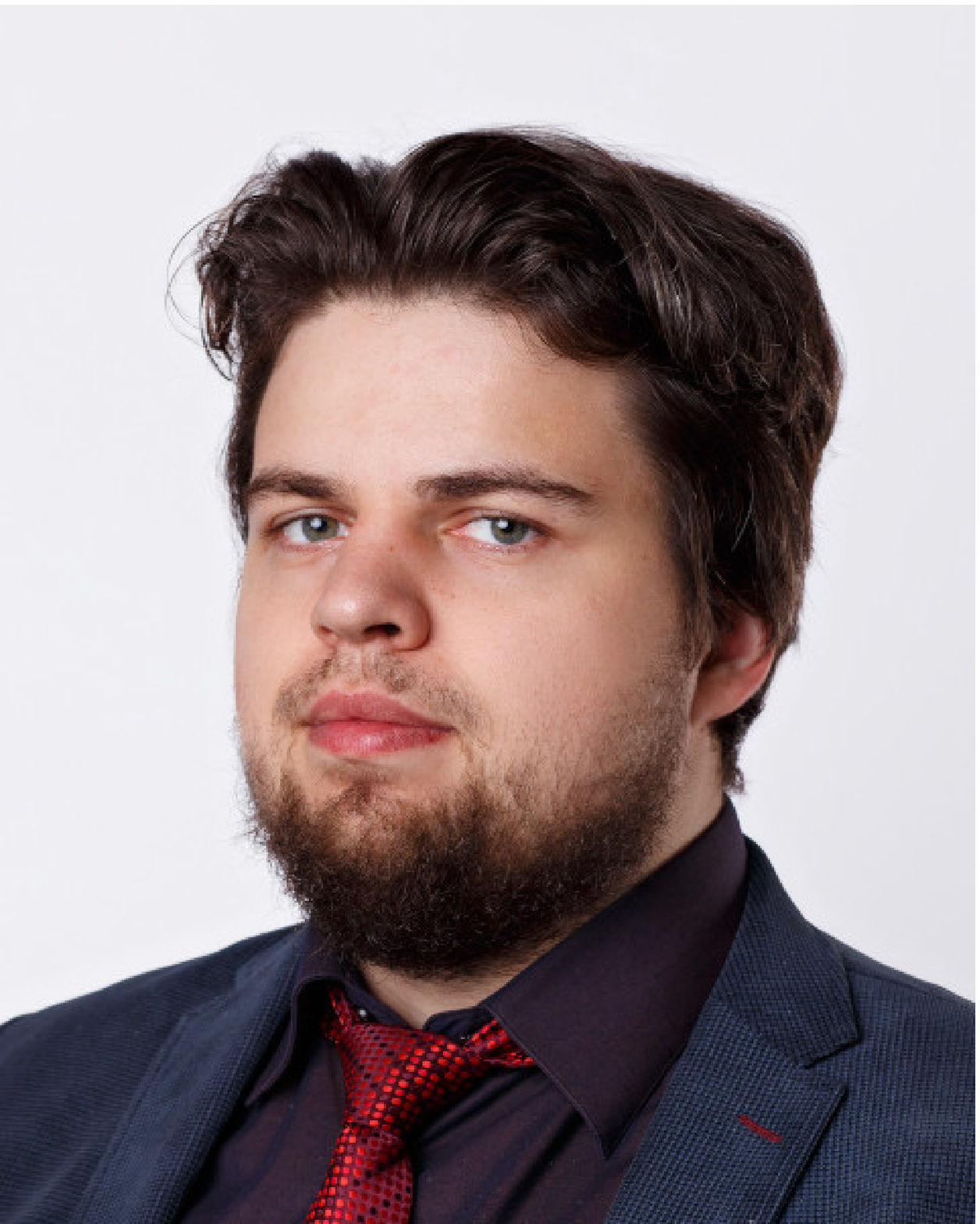}}]
{Grigorii Trofimiuk} (S'15)
was born in Boksitogorsk, Russia in 1994. He received 
the B.Sc. and M.Sc. degrees from St.Petersburg  Polytechnic
University in 2016 and 2018, respectively, all in computer science. He is currently working toward
the Ph.D. degree at the ITMO University in St.Petersburg, Russia. His research interests include coding theory and its applications in telecommunications. 

\end{IEEEbiography}
\begin{IEEEbiography}[{\includegraphics[width=\textwidth]{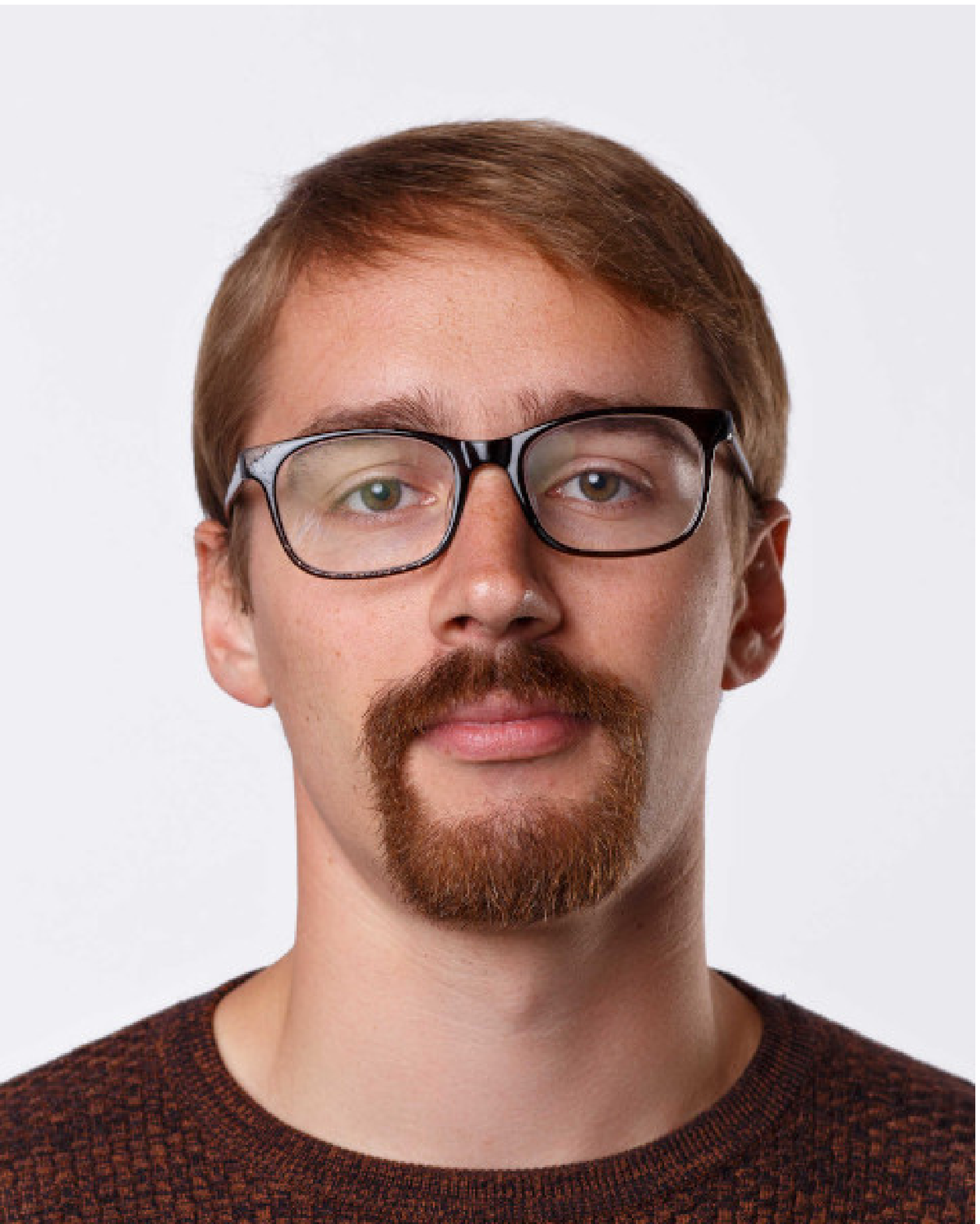}}]{Nikolai Iakuba} (S'15) was born in St.Petersburg and obtained his M.Sc. degree in St.Petersburg Polytechnic University. He is currently working toward
the Ph.D. degree at the ITMO University in Saint Petersburg, Russia. His research interests include coding theory, especially polar and Reed-Muller codes
\end{IEEEbiography}
\begin{IEEEbiography}[{\includegraphics[width=\textwidth]{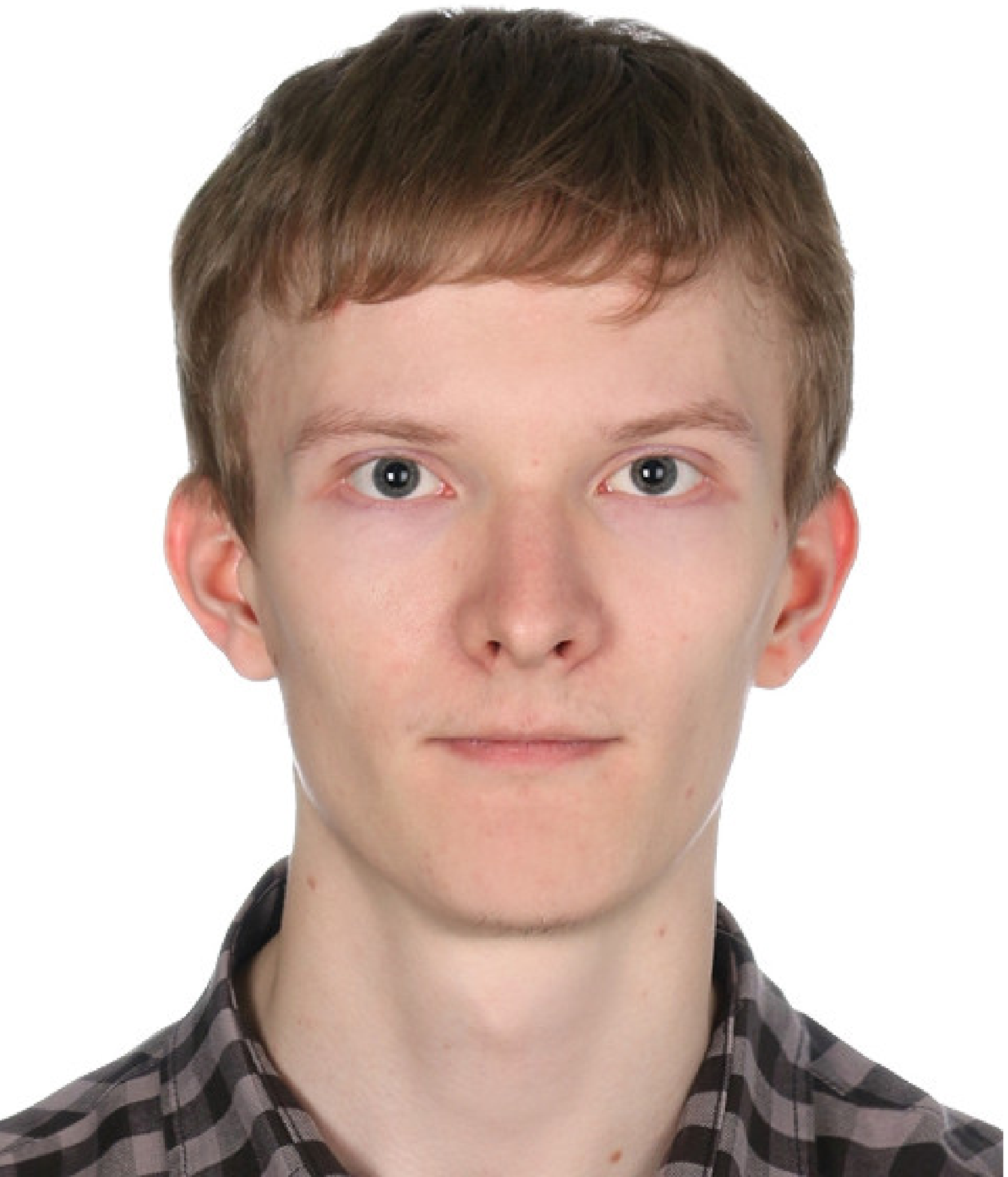}}]{Stanislav Rets} received the B.Sc. and M.Sc. degrees from St. Petersburg Polytechnic University, Russia, in 2015 and 2017, respectively. His current research interests include coded modulation techniques based on polar codes.
\end{IEEEbiography}
\begin{IEEEbiography}[{\includegraphics[width=\textwidth]{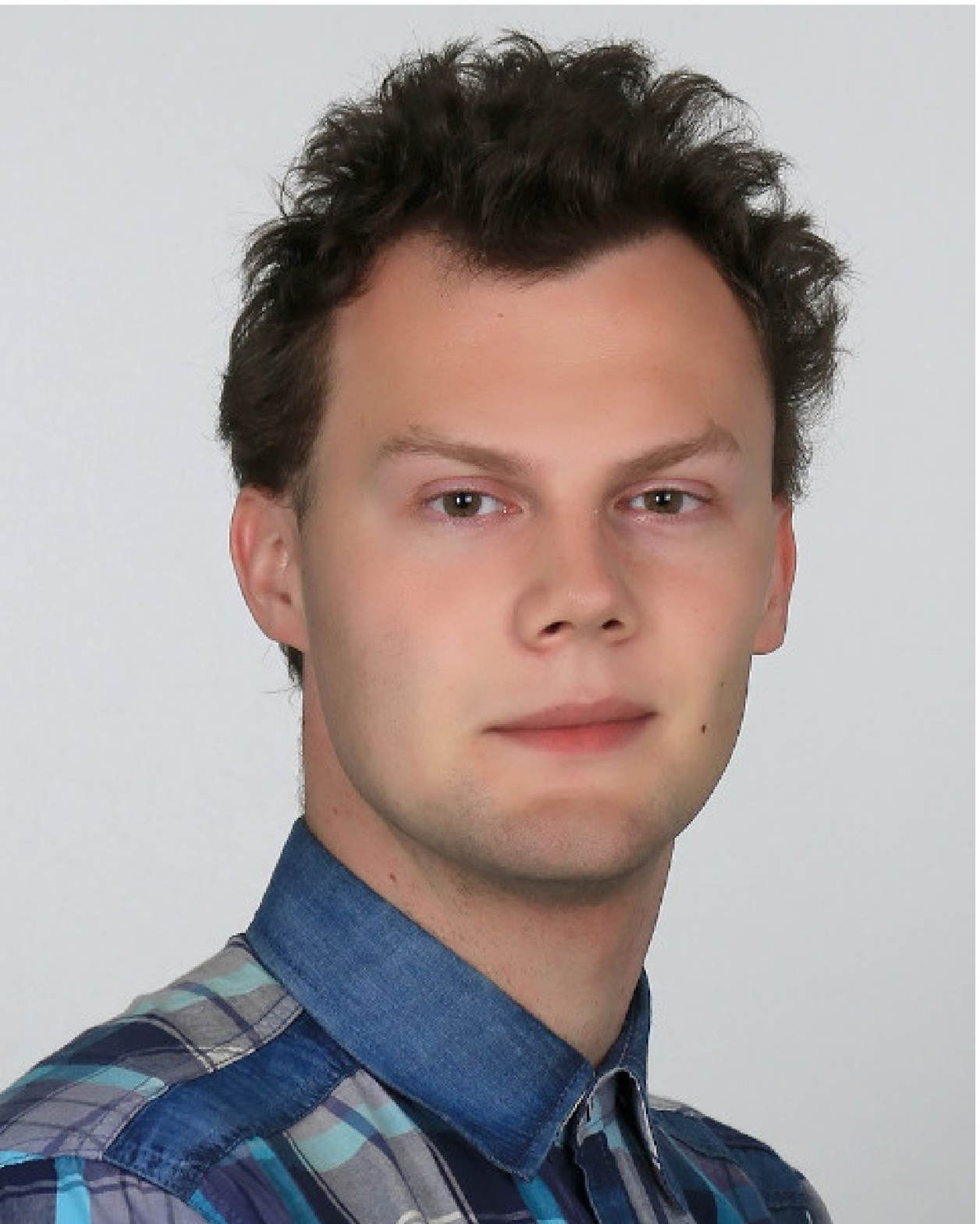}}]{Kirill Ivanov}(S'16) obtained his B.Sc. and M.Sc. degrees from St. Petersburg Polytechnic University, Russia, in 2015 and 2017, respectively. Currently he is a PhD student at \'Ecole polytechnique f\'ed\'erale de Lausanne, Switzerland under the supervision of Prof. R\"udiger Urbanke. 

His research interests include wireless communications systems and coding theory, with focus on polar and Reed-Muller codes. 
\end{IEEEbiography}
\begin{IEEEbiography}[{\includegraphics[width=\textwidth]{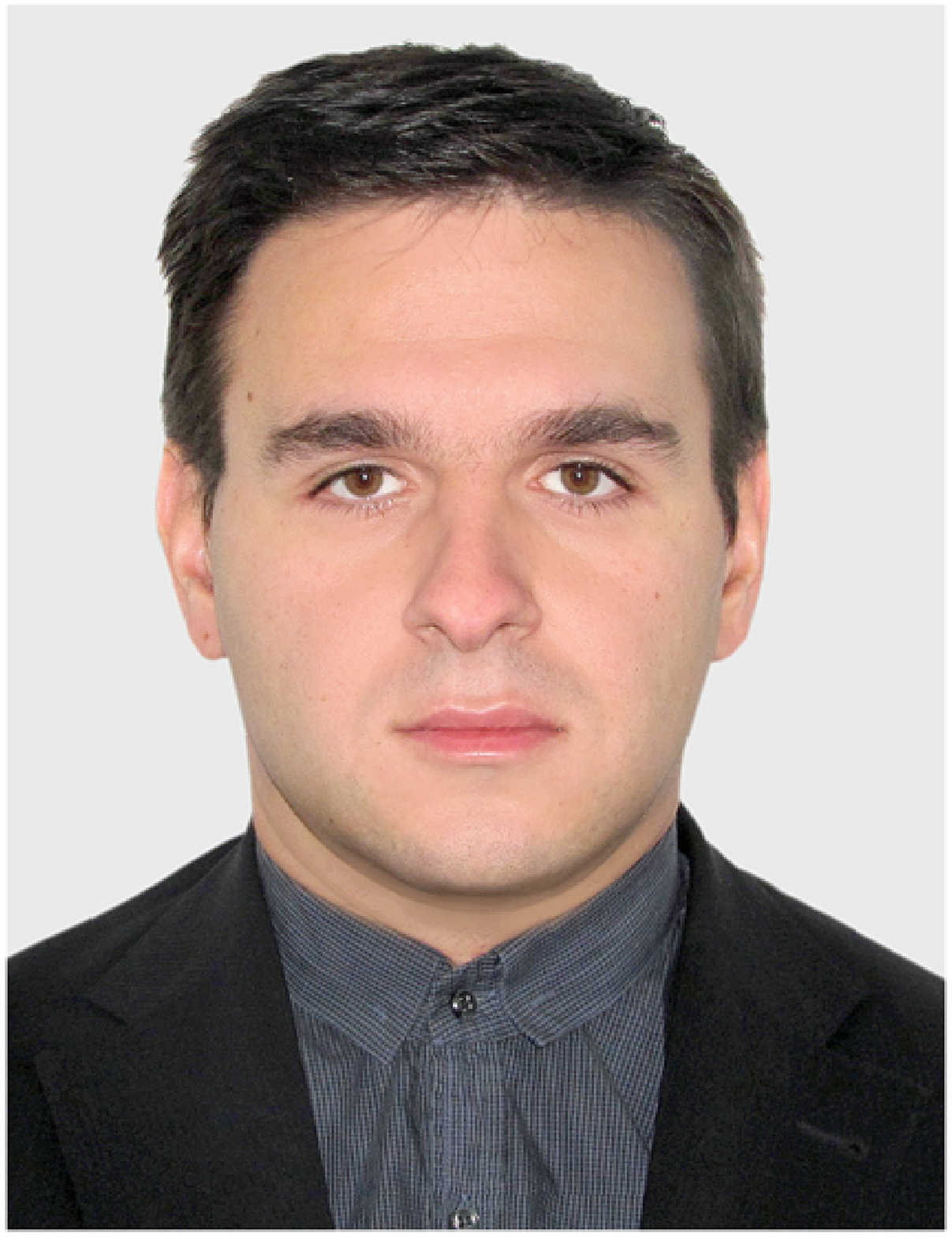}}]{Peter Trifonov} (S'02,M'05)
was born in St.Petersburg, USSR in 1980. He received 
the MSc and PhD (Candidate of Science) degrees from Saint Petersburg  Polytechnic
University in 2003 and 2005, and Dr.Sc degree from the Institute for Information Transmission Problems in 2018.  His research interests include coding theory and its applications in telecommunications and storage systems.  Currently he is a professor at the ITMO University in Saint Petersburg, Russia.
He is an editor at IEEE Transaction on Communications.
\end{IEEEbiography}

\end{document}